\documentclass[runningheads]{llncs}
\usepackage[T1]{fontenc}
\usepackage{graphicx}
\usepackage{url}
\usepackage{amsmath}
\usepackage{amssymb}
\usepackage{algorithm}
\usepackage{algpseudocode}
\usepackage{booktabs}
\usepackage{tabularx}
\usepackage{multirow}
\newcommand{\ef}{\mathrm{EF}}
\newcommand{\befmon}{B_{\ef}^{mon}}
\begin{document}
\title{Asymptotic Subsidies for Envy-Free Allocation}
%
%\titlerunning{Abbreviated paper title}
% If the paper title is too long for the running head, you can set
% an abbreviated paper title here
%
% \author{First Author\inst{1}\orcidID{0000-1111-2222-3333} \and
% Second Author\inst{2,3}\orcidID{1111-2222-3333-4444} \and
% Third Author\inst{3}\orcidID{2222--3333-4444-5555}}
\author{Zekai Wu\inst{1}\orcidID{0009-0007-1562-4827}}
\authorrunning{Zekai Wu}
% First names are abbreviated in the running head.
% If there are more than two authors, 'et al.' is used.
%
% \institute{Princeton University, Princeton NJ 08544, USA \and
% Springer Heidelberg, Tiergartenstr. 17, 69121 Heidelberg, Germany
% \email{lncs@springer.com}\\
% \url{http://www.springer.com/gp/computer-science/lncs} \and
% ABC Institute, Rupert-Karls-University Heidelberg, Heidelberg, Germany\\
% \email{\{abc,lncs\}@uni-heidelberg.de}}
\institute{The Hong Kong Polytechnic University, Kowloon, Hong Kong}
\maketitle              % typeset the header of the contribution
\begin{abstract}
We study the minimum monetary subsidy required for exact envy-freeness in allocating $m_n$ indivisible goods to $n$ agents. Valuations are additive, item values are i.i.d. from a distribution on $[0,1]$ with density bounded above and away from zero, and all agents value money equally. We consider $n\to\infty$ with $m_n=qn+r_n$, where $q\ge0$ is a fixed integer and $0\le r_n<n$. For the nonzero-remainder case where $1\le r_n<n$, we give a cubic-time algorithm based on a maximum-weight matching of balanced bundles and complementary dual prices. It returns an envy-free outcome for every instance. Under the random model, both its payment and the unrestricted minimum are $n-r_n+o_p(n)$ for $q\ge1$. When $q=0$ and $r_n/n$ converges to a limit below one, the error term improves to $O_p(1)$. For exact divisibility, where $r_n=0$, prior work gives zero subsidy with high probability for $q\ge2$. The square case $q=1$ is exceptional. Its minimum subsidy is $\Theta_p(\log n)$ and, with high probability, equals the optimum over allocations in which every agent receives exactly one good. This restricted optimum can be computed in cubic time using a maximum-weight perfect matching and all-pairs shortest paths. Together, these results provide a unified asymptotic characterization of the minimum subsidy.

\keywords{Fair division \and Envy-freeness \and Subsidies \and Random valuations}
\end{abstract}

\section{Introduction}

Fair allocation becomes challenging when shared resources cannot be divided arbitrarily among users~\cite{moulin2019internet,amanatidis2023survey}. In practical computing systems, for example, resources such as memory and accelerators often need to be allocated in discrete units~\cite{sivan2026quota,vuppalapati2023karma}. Such units are naturally modeled as indivisible goods. A central fairness criterion is \emph{envy-freeness} (EF)~\cite{foley1967resource,varian1974equity}, which requires every agent to value her own bundle at least as highly as any other agent's bundle. However, an EF allocation of the goods need not exist, even in the simple case of a single good and two agents. A natural approach is to supplement the allocation with compensation.

A related idea appears in computing systems, where credits are used to mediate resource allocation. Karma awards credits when a user's unused resource share is borrowed and uses the resulting balances to prioritize future access~\cite{vuppalapati2023karma}. Quota Marketplace uses internal credits and dynamic prices to allocate ML training resources across teams~\cite{sivan2026quota}. Although these credits serve system-specific operational purposes rather than act as monetary subsidies, they show how credit balances can accompany allocations of physical resources. We study a static fair-division analogue in which indivisible goods are allocated together with nonnegative monetary subsidies from an external budget. Money has the same unit value for every agent, and EF is evaluated over the combined allocation of goods and money.

The fair-division literature provides strong worst-case guarantees for this model. Halpern and Shah~\cite{halpern2019subsidy} characterized which allocations can be made envy-free through payments and studied the subsidies required. For $n$ agents with additive valuations in which each good has value at most one, Brustle et al.~\cite{brustle2020dollar} proved that a total subsidy of at most $n-1$ always suffices when the allocation can also be chosen. This bound is tight in the worst case, but does not describe the subsidy typically needed for a given number of goods. Indeed, Halpern and Shah's experiments on synthetic and Spliddit instances found that small subsidies often suffice~\cite{halpern2019subsidy}, motivating a theoretical understanding of the compensation required beyond the worst case.

The asymptotic fair-division literature offers a framework for this question by drawing agents' values for goods independently from a common distribution. Dickerson et al.~\cite{dickerson2014computational} initiated the study of envy-free allocations in this setting. Manurangsi and Suksompong~\cite{manurangsi2020envy,manurangsi2021closing} established existence guarantees when goods are sufficiently plentiful and showed that divisibility of the number of goods by the number of agents plays an important role. These existence results identify settings in which no compensation is needed. When an envy-free allocation of the goods alone does not exist, however, they do not determine the amount of compensation needed to restore envy-freeness.

We connect these two perspectives by asking:
\begin{quote}
\emph{How much total monetary compensation is needed for exact envy-freeness with high probability, and how does this amount depend on the number of goods relative to the number of agents?}
\end{quote}
We consider additive valuations with individual values drawn independently from a common distribution on $[0,1]$ whose density is bounded above and away from zero. For each realized instance, we minimize the total nonnegative compensation over all envy-free allocations of goods and money. A budget suffices with high probability if, with probability $1-o(1)$ as the number of agents grows, the instance admits an outcome within that budget satisfying every pairwise EF inequality exactly.

\paragraph{Our contributions.}
We determine the asymptotic behavior of the minimum total monetary compensation needed for envy-freeness. Our asymptotic results concern $n$ agents and $m_n=qn+r_n$ goods, where $q\ge0$ is a fixed integer and $0\le r_n<n$. Both the quotient $q$ and the remainder $r_n$ matter. For nonzero remainders, we use one balanced-bundle construction based on a maximum-weight matching and dual-price compensation. It is EF for every instance, runs in $O(n^3)$ time, and its total payment satisfies the corresponding estimate in Theorem~\ref{thm:monetary-nonzero-remainders}. The square case $m_n=n$ has a different, logarithmic scale. We prove that the unrestricted optimum coincides w.h.p. with the optimum over singleton allocations, in which each agent receives exactly one good. We compute this restricted optimum by matching and shortest paths in $O(n^3)$ time and derive a distribution-free root/choice-share representation. Together with the known zero-compensation result at exact divisibility for $q\ge2$, these results give the summary in Table~\ref{tab:monetary-summary}.

\begin{table}[htbp]
\caption{Summary of results for the minimum monetary compensation, where
$m_n=qn+r_n$ and $q$ is fixed. For $q=0$ and $r_n>0$, we assume that
$r_n/n\to\lambda\in[0,1)$.}
\label{tab:monetary-summary}
\centering
\small
\renewcommand{\arraystretch}{1.4}
\renewcommand{\tabularxcolumn}[1]{m{#1}}
\newcommand{\summarycell}[2]{%
\shortstack[c]{\rule{0pt}{2.7ex}#1\\[1pt]
{\footnotesize #2}\rule[-1.3ex]{0pt}{2.3ex}}}
\begin{tabularx}{\textwidth}{|
    >{\centering\arraybackslash}m{0.12\textwidth}|
    >{\centering\arraybackslash}X|
    >{\centering\arraybackslash}X|}
\hline
$q$ & $1\le r_n<n$ & $r_n=0$ \\
\hline
$0$
&
\summarycell{$n-r_n+O_p(1)$}{(Thm.~\ref{thm:monetary-nonzero-remainders})}
&
\summarycell{$0$}{(Trivial)}
\\
\hline
$1$
&
\multirow{2}{=}[-7pt]{\centering
\summarycell{$n-r_n+o_p(n)$}{(Thm.~\ref{thm:monetary-nonzero-remainders})}}
&
\summarycell{$\Theta_p(\log n)$}{(Thm.~\ref{thm:square-logarithmic-scale})}
\\
\cline{1-1}\cline{3-3}
$\ge2$
&
&
\summarycell{$0$ w.h.p.}{(Manurangsi and
Suksompong~\cite{manurangsi2020envy})}
\\
\hline
\end{tabularx}
\end{table}

\subsection{Related Work}

Our work connects two lines of research: asymptotic fair division and fair division with subsidies.

\paragraph{Asymptotic fair division.}
This line of work studies the existence and computation of fair allocations under random valuations. Dickerson et al.~\cite{dickerson2014computational} showed that, under suitable distributional assumptions, allocating each good to an agent who values it most yields an EF allocation with high probability when $m\ge Cn\log n$ for a sufficiently large constant $C$. Under the i.i.d. model considered here, Manurangsi and Suksompong~\cite{manurangsi2020envy} established the importance of divisibility, proving high-probability existence when $n\mid m$ and $m\ge2n$, while showing that nonexistence can persist for certain nondivisible ratios $m/n$. Their subsequent work~\cite{manurangsi2021closing} showed that round robin produces an EF allocation with high probability when $m\ge Cn\log n/\log\log n$, without a divisibility requirement. Further work allows distributions to differ across agents~\cite{bai2022asymmetric} or across goods~\cite{kale2025nonidentical}. Bai et al.~\cite{bai2022smoothed} studied smoothed utilities, obtained by randomly perturbing arbitrary additive base valuations, while Benad\`e et al.~\cite{benade2024beyond} established existence guarantees for a class of non-additive valuations under independent random relabelings of the goods. The probabilistic perspective has also been applied to proportionality~\cite{suksompong2016proportional}, allocation to groups~\cite{manurangsi2017groups}, and unequal entitlements~\cite{manurangsi2025asymptotic}. For chores, Manurangsi and Suksompong~\cite{manurangsi2025chores} proved that $m\ge2n$ suffices for high-probability EF existence under analogous distributional assumptions on disutilities, without requiring $n\mid m$. These results concern fairness without monetary compensation. We study the minimum total subsidy needed for exact EF under random valuations, including settings in which an EF allocation of the goods alone is unlikely to exist.

\paragraph{Fair division with subsidies.}
Halpern and Shah~\cite{halpern2019subsidy} showed that an allocation admits envy-eliminating subsidies if and only if it maximizes social welfare over all reassignments of its bundles. They also gave a polynomial-time method for computing the minimum subsidies for a fixed envy-freeable allocation using longest paths in its envy graph. For additive valuations with individual good values at most one, Brustle et al.~\cite{brustle2020dollar} established the tight worst-case bound of $n-1$ on the total subsidy. Lu et al.~\cite{lu2026optimal} extended this bound to mixed goods and chores with additive item values in $[-1,1]$. Caragiannis and Ioannidis~\cite{caragiannis2022limited} studied the computational problem of minimizing total subsidies, giving additive approximation guarantees for a fixed number of agents and hardness of approximation when the number of agents is part of the input. Choo et al.~\cite{choo2024house} considered minimum subsidies in house allocation, where each agent receives exactly one house and excess houses may remain unassigned. They proved NP-hardness in general and solved the case $m=n$ in polynomial time by finding a maximum-weight perfect matching and computing the minimum subsidies for it. Dai et al.~\cite{dai2026complexity} strengthened the hardness result to binary utilities. Our results quantify the minimum total subsidy under i.i.d. valuations. In particular, our square-case analysis shows that the unrestricted optimum coincides with the singleton optimum with high probability and determines its logarithmic scale.

\section{Preliminaries}

Let $[k]:=\{1,2,\ldots,k\}$. We study the problem of allocating a set of indivisible goods $M=[m]$ to a set of agents $N=[n]$ with additive valuations. Each agent $i\in N$ has a valuation function $v_i:2^M\to \mathbb{R}_{\ge 0}$ such that for any subset of goods $S\subseteq M$, $v_i(S)=\sum_{j\in S}v_i(\{j\})$. For a singleton good $j\in M$, we write $v_i(j)$ as shorthand for $v_i(\{j\})$. We assume that $v_i(j)\in[0,1]$ for each agent $i\in N$ and good $j\in M$.\footnote{This assumption is without loss of generality. For a common distribution supported on $[0, L]$ for a fixed $L>0$, we can scale the valuations by a constant factor $1/L$ to ensure that the maximum valuation for a single good of any agent is at most 1.}   An \emph{allocation} is a partition of the goods into $n$ \emph{bundles} $A=(A_1,\ldots,A_n)$, where $A_i$ is the bundle allocated to agent $i$. Denote by $\mathbb{A}$ the set of all possible allocations. We say that an allocation satisfies \emph{envy-freeness} (EF) if for every pair of agents $i,j\in N$, we have $v_i(A_i)\ge v_i(A_j)$.

For asymptotic analysis, the valuations $v_i(j)$ for any agent $i\in N$ and any good $j\in M$ are drawn independently from a given distribution $\mathcal{D}$ over $[0,1]$.  We assume that $\mathcal{D}$ is absolutely continuous on $[0,1]$ with probability density function $f$.  Moreover, there exist constants $0<\alpha\le\beta<\infty$ such that $\alpha \le f(x) \le \beta$ for almost every $x\in[0,1]$.  We refer to this condition as \emph{PDF-boundedness}, which we assume throughout the paper. It is worth noting that our algorithms take only the realized valuations as input and that the density bounds are used in the probabilistic analysis.

All limits are taken as $n\to\infty$. An event holds \emph{with high probability} (w.h.p.) if its probability tends to one. For random variables $Y_n$ and a deterministic sequence $a_n>0$, we write $Y_n=O_p(a_n)$ if, for every $\epsilon>0$, there is a constant $C_\epsilon<\infty$ such that $\Pr(|Y_n|>C_\epsilon a_n)\le\epsilon$ for all sufficiently large $n$, and $Y_n=o_p(a_n)$ if $Y_n/a_n$ converges to zero in probability. For nonnegative $Y_n$, we write $Y_n=\Theta_p(a_n)$ if, for every $\epsilon>0$, there are constants $0<c_\epsilon<C_\epsilon<\infty$ such that $\Pr(c_\epsilon a_n\le Y_n\le C_\epsilon a_n)\ge1-\epsilon$ for all sufficiently large $n$. The number of goods may depend on $n$, and we set $m=m_n=qn+r_n$, where $q\ge0$ is a fixed integer and $0\le r_n<n$.

To restore envy-freeness, we allow nonnegative monetary subsidies from an external budget. We normalize every agent's value for one unit of money to one. Let $X=(x_1,\ldots,x_n)$ be the subsidy vector, where $x_i\ge0$ is the amount given to agent $i$, and let $B:=\sum_{i=1}^n x_i$ denote the total subsidy. An outcome $(A,X)$ is \emph{envy-free} if $v_i(A_i)+x_i\ge v_i(A_j)+x_j$ for every pair of agents $i,j\in N$. Define the minimum total compensation by
\begin{align*}
    \befmon(n,m_n):=\min_{\substack{A\in\mathbb{A},\;X\in\mathbb{R}_{\ge0}^n\\(A,X)\text{ is EF}}}\sum_{i=1}^n x_i.
\end{align*}
When there is no ambiguity, we abbreviate this quantity by $\befmon$.

Due to the page limit, full proofs in the following sections are deferred to Appendices A and B.

\section{Nonzero remainders}
\label{sec:monetary-nonzero-remainders}

When $m_n$ is not divisible by $n$, a balanced partition has $r_n$ bundles of size $q+1$ and $n-r_n$ bundles of size $q$. This section shows that this one-good imbalance determines the leading subsidy. Our main result is the following theorem.

\begin{theorem}\label{thm:monetary-nonzero-remainders}
Fix an integer $q\ge0$ and suppose that $m_n=qn+r_n$, where $1\le r_n<n$. If $q=0$, assume additionally that $r_n/n\to\lambda<1$. Then
\[
    \befmon=
    \begin{cases}
        n-r_n+O_p(1), & q=0,\\
        n-r_n+o_p(n), & q\ge1.
    \end{cases}
\]
Moreover, Algorithm~\ref{alg:balanced-compensation} returns an EF outcome for every valuation profile in $O(n^3)$ time, and under the random model its total payment obeys the corresponding estimate above.
\end{theorem}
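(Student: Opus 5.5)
The proof needs a lower bound on $\befmon$ valid for every allocation, and an upper bound from the output of Algorithm~\ref{alg:balanced-compensation}.

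\textbf{Lower bound.} Take any EF outcome $(A,X)$ and let $i^\ast$ receive the largest subsidy. Summing the EF inequalities $v_j(A_j)+x_j\ge v_j(A_{i^\ast})+x_{i^\ast}$ over all $j$ gives
\[
\sum_j x_j\ \ge\ n\,x_{i^\ast}-\sum_j x_j+\sum_j\bigl(v_j(A_{i^\ast})-v_j(A_j)\bigr),
\]
which by itself is too weak. The useful approach is to look at the agents $j$ who receive fewer than $q+1$ goods; there are at least $n-r_n$ of them by pigeonhole. I compare each such $j$ to an agent whose bundle is worth roughly one extra good to everyone. Concretely, consider the bundle $A_k$ that maximizes $\min_j v_j(A_k)$ among the large bundles. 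Agent $j$ needs $x_j\ge x_k+v_j(A_k)-v_j(A_j)$, and we may assume $x_k\ge 0$.

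For $q=0$ the statement is direct. The $n-r_n$ agents receive nothing, and the $r_n$ goods go to other agents. An empty agent $j$ needs $x_j\ge \max_k v_j(A_k)$ over the $r_n$ singletons. With $r_n\ge 1$ goods and $n-r_n$ empty agents, each empty agent's maximum value over the goods is close to $1$: for i.i.d.\ draws with density at least $\alpha$, $1-\max$ has expectation $O(1/r_n)$. The difficulty is that when $r_n$ is bounded, this gives deficit $O((n-r_n)/r_n)$, which is not $O(1)$. So I need more care, and I will use the chain structure: by Halpern--Shah envy-freeability, the minimum subsidy equals a sum of longest-path lengths in the envy graph, and the largest subsidy is received by a zero-subsidy agent's chain.

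For $q\ge1$, I write $v_j(A_k)-v_j(A_j)$ as (value of the extra good) plus (fluctuation of sums of $q$ goods). Welfare-maximality forces the values to be near $1$, because an EF-able allocation maximizes welfare over bundle permutations. I then bound the fluctuation by $o(n)$ in total using concentration: under the best-item assignment, each agent's $q$ goods are near the top order statistics, so the discrepancies are $o(1)$ on average.

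\textbf{Upper bound and algorithm.} Algorithm~\ref{alg:balanced-compensation} should form balanced bundles (here I commit to the intended form: $r_n$ bundles of size $q+1$ and $n-r_n$ of size $q$). It computes a maximum-weight assignment of agents to these bundle-slots via the Hungarian method in $O(n^3)$ time; the bundles themselves come from a max-weight matching of agents to goods in rounds. It then sets subsidies from the optimal LP dual prices $p_k$ of the assignment, paying $x_i=\max_k p_k-p_{\sigma(i)}$. Complementary slackness gives $v_i(A_{\sigma(i)})-p_{\sigma(i)}\ge v_i(A_k)-p_k$, so EF holds for every instance and all $x_i\ge0$.

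For the payment, I bound $\sum_i(\max p-p_{\sigma(i)})$ by showing that the dual prices of large bundles exceed those of small bundles by about $1$, up to $o(1)$ each ($O(1/n)$ each when $q=0$). Small-bundle agents then pay roughly $1$ and large-bundle agents roughly $0$. The $q=0$ case reduces to the assignment of $r_n$ goods to $n$ agents. Since $r_n/n\to\lambda<1$, there are linearly many unmatched agents, so the dual price of each good is at least the second-highest-type competitor value. The total deficit $\sum(1-p)$ then behaves like a sum of order-statistic gaps, which is $O_p(1)$.

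The main obstacle is the sharp control of the dual prices: showing that the random assignment-game prices are uniformly near $1$ and near $0$ for the two bundle classes. In the $q=0$ case this must hold with $O_p(1)$ total error; for the lower bound, it must match that error as well.
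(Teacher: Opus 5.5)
\textbf{Where the proposal stands.} Your deterministic part matches the paper. Complementary slackness gives $v_i(A_i)-\psi_{\sigma(i)}\ge v_i(A_k)-\psi_k$, so $x_i=\max_k\psi_k-\psi_{\sigma(i)}$ is nonnegative and EF for every instance, and the Hungarian method gives $O(n^3)$. Note, though, that Algorithm~\ref{alg:balanced-compensation} uses a \emph{fixed, valuation-independent} partition into $r_n$ bundles of size $q+1$ and $n-r_n$ of size $q$, not bundles built by rounds of matching. This matters below.

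\textbf{Upper bound.} You leave open the key claim that large-bundle prices exceed small-bundle prices by $1-o(1)$ uniformly. The paper proves $\max_j\theta_j-\min_j\theta_j=o(1)$ for $\theta_j=\psi_j-|P_j|$ as follows. Call $(i,j)$ high when agent $i$ values every good of $P_j$ at least $1-t_n/(q+1)$, with $t_n\asymp(\log n/n)^{1/(q+1)}$. Because the partition is fixed, these indicators are independent, so w.h.p.\ the high graph is an expander for every bijection. The contracted digraph then has diameter $O(\log n)$, and dual feasibility gives $\theta_h\ge\theta_j-t_n$ along each arc. With valuation-dependent bundles this independence is lost. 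For $q=0$, your claim that each price is at least a ``second-highest competitor value'' is unjustified, since the top agents for a good may be matched elsewhere. What holds is that agents $i\in U$ matched to empty bundles have $\phi_i=0$, hence $\psi_g\ge\max_{i\in U}v_i(g)$, together with $\psi_g\le1$. Because $U$ is chosen by the matching, you then need $\sum_g(1-\max_{i\in U}v_i(g))=O_p(1)$ \emph{uniformly} over all $(n-m_n)$-sets $U$.

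\textbf{Lower bound: this is where the proposal breaks.} There are three problems.
\begin{enumerate}
\item For $q\ge1$ it is false that at least $n-r_n$ agents get at most $q$ goods. With $q=1$, $r_n=1$ and $n$ odd, give $(n+1)/2$ agents two goods and leave the rest empty. The correct accounting is weighted: $\sum_{s_i\le q}(q+1-s_i)\ge n-r_n$, with each small agent charged about $q+1-s_i$.
\item Comparing small agents with one large bundle cannot work. A fixed $(q+1)$-set is valued within $o(1)$ of $q+1$ by only a vanishing fraction of agents. Halpern--Shah welfare maximality over bundle permutations says nothing about individual values being near $1$. Concentration ``under the best-item assignment'' is beside the point, because the bound must hold for every valuation-dependent EF allocation. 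The paper needs estimates uniform over families of goods and a split on the excess $E=\sum_i(s_i-q)_+$.
\begin{itemize}
\item If $E\ge n^{2/3}$, extract at least $E/(q+1)$ disjoint $(q+1)$-tuples. Then all but $n^{2/3}$ agents value some tuple highly.
\item If $E<n^{2/3}$, one high tuple forces a seed payment $1-o(1)$ on some exact-$q$ agent. Two rounds of EF comparisons through size-$q$ bundles propagate it to almost all agents.
\end{itemize}
\item For $q=0$ you correctly see that charging each empty agent its own maximum loses order $(n-r_n)/r_n$. The fix is not a longest-path argument. All empty-bundle agents hold identical bundles, so they must receive the \emph{same} payment. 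Since one of the $m_n+1$ agents with the largest maxima must be empty, the total is at least $(n-m_n)(1-\eta_{(m_n+1)})$, and $(n-m_n)\eta_{(m_n+1)}=O_p(1)$.
\end{enumerate}
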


When $1\le r_n<n$, the same deterministic construction underlies both the sparse case $q=0$ and the case $q\ge1$. Fix a valuation-independent partition $P_1,\ldots,P_n$ of $M$ with $n-r_n$ bundles of size $q$ and $r_n$ bundles of size $q+1$. We include empty bundles when $q=0$ and use the convention $v_i(\varnothing)=0$. Set $w_{ij}:=v_i(P_j)$ for every $i,j\in N$. The maximum-weight matching primal is
\begin{align}
    \max\quad &\sum_{i\in N}\sum_{j\in N}w_{ij}y_{ij}
                                                        \label{eq:balanced-primal}\\
    \text{s.t.}\quad
    &\sum_{j\in N}y_{ij}\le1
        &&\text{for every $i\in N$},\nonumber\\
    &\sum_{i\in N}y_{ij}\le1
        &&\text{for every $j\in N$},\nonumber\\
    &y_{ij}\ge0
        &&\text{for every $i,j\in N$}.\nonumber
\end{align}
Its dual is
\begin{align}
    \min\quad &\sum_{i\in N}\phi_i+\sum_{j\in N}\psi_j
                                                        \label{eq:balanced-dual}\\
    \text{s.t.}\quad
    &\phi_i+\psi_j\ge v_i(P_j)
        &&\text{for every $i,j\in N$},\nonumber\\
    &\phi_i\ge0
        &&\text{for every $i\in N$},\nonumber\\
    &\psi_j\ge0
        &&\text{for every $j\in N$}.\nonumber
\end{align}
Although~\eqref{eq:balanced-primal} permits partial matchings, nonnegative weights on the complete balanced bipartite graph allow any integral optimum to be extended to a perfect matching without decreasing its weight.

\begin{algorithm}[t]
\caption{EF allocation and compensation via dual prices}
\label{alg:balanced-compensation}
\begin{algorithmic}[1]
    \Require Valuation $(v_i(g))_{i\in N,g\in M}$ and
    $m_n=qn+r_n$
    \Ensure An allocation $A$ and a nonnegative payment vector $X$ such that
    $(A,X)$ is EF
    \State By a fixed valuation-independent rule, partition $M$ into
    $P_1,\ldots,P_n$, with $n-r_n$ bundles of size $q$ and $r_n$ bundles of
    size $q+1$
    \State Set $w_{ij}\gets v_i(P_j)$ for every $i,j\in N$
    \State Compute a maximum-weight perfect matching under weights $w_{ij}$,
    together with a complementary optimum
    $(\phi,\psi)$ of~\eqref{eq:balanced-dual}
    \State For the bundle $P_j$ matched to agent $i$, set
    $A_i\gets P_j$ and $\pi_i\gets\psi_j$
    \State Set $H\gets\max_{i\in N}\pi_i$ and
    $x_i\gets H-\pi_i$ for every $i\in N$
    \State \Return $A$ and $X$
\end{algorithmic}
\end{algorithm}

The construction satisfies the following basic properties.

\begin{lemma}
\label{lem:balanced-compensation}
For every valuation profile, Algorithm~\ref{alg:balanced-compensation} returns an EF outcome, and its total payment satisfies
\begin{align}
    \sum_{i\in N}x_i
    =n\max_{j\in N}\psi_j-\sum_{j\in N}\psi_j.       \label{eq:balanced-payment}
\end{align}
The algorithm runs in $O(nm_n+n^3)$ arithmetic operations, and hence in $O(n^3)$ operations when $q$ is fixed.
\end{lemma}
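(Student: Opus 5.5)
The proof has three parts: envy-freeness from complementary slackness, the payment identity by summation, and the running time from a Hungarian-type solver.

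\textbf{Envy-freeness.} Let $\sigma$ be the perfect matching computed by Algorithm~\ref{alg:balanced-compensation}, so that $A_i=P_{\sigma(i)}$ and $\pi_i=\psi_{\sigma(i)}$. Since $(\phi,\psi)$ is complementary to an optimal matching, every matched edge is tight whenever its primal variable is positive:
\[
  \phi_i+\psi_{\sigma(i)}=v_i(P_{\sigma(i)}) \quad\text{for every } i\in N.
\]
Here I use the fact that the extension from a possibly partial integral optimum to a perfect matching can be taken to consist of tight edges. If the solver directly returns a perfect matching together with a complementary dual, as the Hungarian method does, this is immediate. Otherwise I argue as follows. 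An unmatched row $i$ has $\phi_i=0$ and an unmatched column $j$ has $\psi_j=0$, by complementary slackness on the degree constraints. So for an added edge $(i,j)$, dual feasibility gives $\phi_i+\psi_j=0\ge v_i(P_j)\ge0$, and the edge is tight.

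Dual feasibility says $\phi_i+\psi_{\sigma(k)}\ge v_i(P_{\sigma(k)})$ for every $k$. Combining this with tightness gives
\[
  v_i(A_i)-\pi_i=\phi_i\ge v_i(A_k)-\pi_k .
\]
Now set $x_i=H-\pi_i$. This is nonnegative because $H=\max_i\pi_i$. Then
\[
  v_i(A_i)+x_i=H+\phi_i\ge H+v_i(A_k)-\pi_k=v_i(A_k)+x_k,
\]
which is exactly envy-freeness. I expect the main subtlety to be this tightness of the matched edges, that is, making sure the extension step preserves complementary slackness. Everything else is algebra.

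\textbf{Payment identity.} Because $\sigma$ is a bijection, $\sum_i\pi_i=\sum_j\psi_j$ and $H=\max_j\psi_j$. Summing $x_i=H-\pi_i$ over $i$ then gives~\eqref{eq:balanced-payment}.

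\textbf{Running time.} Forming the weights $w_{ij}=v_i(P_j)$ takes $O(nm_n)$ operations, since each agent sums over all goods once. The Hungarian algorithm, or shortest augmenting paths with potentials, on an $n\times n$ assignment problem takes $O(n^3)$ time. It returns a perfect matching together with feasible potentials satisfying tightness on matched edges.

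These potentials may be negative, so I need to convert them into a feasible solution of~\eqref{eq:balanced-dual}. I would shift by a constant, replacing $(\phi,\psi)$ with $(\phi+c,\psi-c)$, which preserves all sums $\phi_i+\psi_j$. Choosing $c$ so that $\min_j\psi_j=0$ makes $\psi\ge0$.

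It remains to check $\phi\ge0$. Take $j^*$ with $\psi_{j^*}=0$. Dual feasibility gives $\phi_i\ge v_i(P_{j^*})\ge0$ for every $i$. This also settles complementarity with the perfect matching: all degree constraints are tight, and all matched edges are tight.

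The final steps of the algorithm take $O(n)$ time, so the total is $O(nm_n+n^3)$. When $q$ is fixed, $m_n\le(q+1)n$, so the bound becomes $O(n^3)$.
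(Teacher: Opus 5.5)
Your proof is correct and follows the paper's argument. Tightness on matched edges combined with dual feasibility gives $v_i(A_k)-v_i(A_i)\le\pi_k-\pi_i=x_i-x_k$; summing over the bijection gives the payment identity; and a primal--dual assignment solver gives the $O(nm_n+n^3)$ bound. You also check two details the paper leaves implicit: that edges added when extending a partial optimum are tight, and that Hungarian potentials can be shifted by a constant to meet the nonnegativity constraints of~\eqref{eq:balanced-dual}.
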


\subsection{The sparse case: $q=0$}

Here $r_n=m_n$. The common construction assigns every singleton bundle to a distinct agent and matches each remaining agent to an empty bundle. The $q=0$ assertion of Theorem~\ref{thm:monetary-nonzero-remainders} gives $\befmon/n\rightarrow_p1-\lambda$. Thus, for any fixed $\epsilon\in(0,1-\lambda)$, a total subsidy of $(1-\lambda-\epsilon)n$ is insufficient to achieve envy-freeness w.h.p., whereas $(1-\lambda+\epsilon)n$ is sufficient w.h.p.

The remaining analysis separates two probabilistic estimates. The matching chosen by Algorithm~\ref{alg:balanced-compensation} determines a set of empty-bundle agents from the entire valuation matrix, so the upper-bound error must be controlled uniformly over all possible such sets. The lower bound instead depends on an order statistic of the agents' largest values. We isolate these two estimates first.

\begin{lemma}
\label{lem:sparse-uniform-deficiency}
Suppose that $0<m_n<n$ and $m_n/n\to\lambda<1$, and define
\[
    \Delta_n:=
    \max_{\substack{U\subseteq N\\|U|=n-m_n}}
    \sum_{g\in M}\left(1-\max_{i\in U}v_i(g)\right).
\]
Then $\Delta_n=O_p(1)$.
\end{lemma}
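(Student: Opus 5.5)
The plan is a union bound over all candidate sets $U$, driven by an exponential tail estimate for a single fixed $U$. Uniformity should be cheap because $|U|=n-m_n$ is linear in $n$, so that each maximum $\max_{i\in U}v_i(g)$ is within $O(1/n)$ of $1$ with exponentially small failure probability. Write $k_n:=n-m_n$. Since $m_n/n\to\lambda<1$, we have $k_n\ge c n$ for all large $n$, with $c:=(1-\lambda)/2>0$.

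\emph{Step 1 (fixed $U$).} Fix $U$ with $|U|=k_n$ and set $Y_g:=1-\max_{i\in U}v_i(g)$ for $g\in M$. The variables $Y_g$ depend on disjoint sets of entries of the valuation matrix, so they are independent. By PDF-boundedness, $\Pr(v_i(g)>1-y)\ge\alpha y$ for $y\in[0,1]$. Hence
\[
\Pr(Y_g>y)\le(1-\alpha y)^{k_n}\le e^{-\alpha k_n y}.
\]
Integrating this tail gives the moment generating function bound
\[
\mathbb{E}\,e^{(\alpha/2)k_nY_g}\le 1+\int_0^\infty \tfrac{\alpha}{2}k_n e^{(\alpha/2)k_n y}e^{-\alpha k_n y}\,dy=2.
\]
By Chernoff and independence, for every $s>0$,
\[
\Pr\Bigl(\sum_{g\in M}Y_g>s\Bigr)\le e^{-(\alpha/2)k_n s}\,2^{m_n}.
\]

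\emph{Step 2 (union bound).} There are at most $2^n$ sets $U$, and $2^{m_n}\le 2^n$. Using $k_n\ge cn$, we obtain
\[
\Pr(\Delta_n>s)\le 4^n e^{-(\alpha c/2)ns}=\exp\bigl(n(\ln 4-\alpha c s/2)\bigr).
\]
Choose $s_0>2\ln4/(\alpha c)$. Then $\Pr(\Delta_n>s_0)\to0$, so in fact $\Delta_n\le s_0$ w.h.p. for a deterministic constant $s_0$, which is stronger than $\Delta_n=O_p(1)$.

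\emph{Where the difficulty lies.} The only delicate point is the one flagged before the lemma: $U$ is chosen depending on the whole valuation matrix, so a bound on the expectation for a single $U$ (which is $O(m_n/k_n)=O(1)$) is not enough. This is resolved by obtaining a tail for a fixed $U$ that decays like $e^{-\Omega(n)s}$. Such a tail is available precisely because each $Y_g$ is a minimum-type statistic over linearly many independent draws, so $k_nY_g$ has a uniformly sub-exponential tail. That decay beats the $2^n$ count of sets.

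The hypothesis $\lambda<1$ enters exactly through $k_n\ge cn$. Without it (say $k_n=o(n)$), the entropy factor $\binom{n}{k_n}$ would still be controllable, but the constant $s_0$ would have to grow. Since the lemma does not need that regime, I would not pursue it.
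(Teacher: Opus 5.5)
Your proposal is correct and follows essentially the same route as the paper. Both arguments use the fixed-$U$ tail bound $e^{-\alpha(n-m_n)t}$, an MGF bound of $2$ at parameter $\alpha(n-m_n)/2$, and an exponential Markov bound using independence across goods. Both then take a union bound over the $\binom{n}{n-m_n}$ choices of $U$, which you bound more crudely by $2^n$, and conclude because $n-m_n$ is linear in $n$.
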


The next estimate concerns the lower bound. Among any $m_n+1$ agents, at least one must receive an empty bundle; we will apply this observation to the agents whose favorite goods have values closest to one.

\begin{lemma}
\label{lem:sparse-critical-row}
Suppose that $0<m_n<n$ and $m_n/n\to\lambda<1$. For every agent $i$, define $\eta_i:=1-\max_{g\in M}v_i(g)$, and let $\eta_{(m_n+1)}$ be the $(m_n+1)$-st smallest of these values. Then
\[
    (n-m_n)\eta_{(m_n+1)}=O_p(1).
\]
\end{lemma}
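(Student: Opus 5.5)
We need to show that $(n-m_n)\eta_{(m_n+1)}=O_p(1)$. Write $k:=n-m_n$, so that $k/n\to1-\lambda>0$. The plan is to reduce everything to a binomial tail bound.

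\emph{The variables.} Because the valuations are independent across agents, the variables $\eta_1,\ldots,\eta_n$ are i.i.d. Each is the minimum of $m_n$ i.i.d. copies of $1-v$, where $v\sim\mathcal{D}$. By PDF-boundedness, $\Pr(1-v\le t)\ge \alpha t$ for $t\in[0,1]$. Hence
\[
\Pr(\eta_i\le t)=1-\bigl(1-\Pr(1-v\le t)\bigr)^{m_n}\ge 1-(1-\alpha t)^{m_n}.
\]

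\emph{Reduction.} Fix a constant $C>0$ and set $t:=C/k$. For large $n$ we have $t\le 1/\alpha$, since $k\to\infty$. The event $\eta_{(m_n+1)}>t$ says that at most $m_n$ agents satisfy $\eta_i\le t$. So it suffices to show that $S:=\#\{i:\eta_i\le t\}\sim\mathrm{Bin}(n,p)$ exceeds $m_n$ with probability at least $1-\epsilon$ once $C$ is large.

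\emph{Lower bound on $p$.} We have $p\ge 1-(1-\alpha C/k)^{m_n}\ge 1-\exp(-\alpha C m_n/k)$. Since $m_n/k\to\lambda/(1-\lambda)$, this is bounded below by a constant close to $1-e^{-\alpha C\lambda/(1-\lambda)}$ when $\lambda>0$.

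\emph{Case $\lambda>0$.} Here $np-m_n\ge n\bigl(1-\lambda-e^{-c C}\bigr)-o(n)$, which is of order $n$ when $C$ is large. Chebyshev's inequality, or a Chernoff bound, then gives $\Pr(S\le m_n)\to0$.

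\emph{Case $\lambda=0$ (the main obstacle).} Now $m_n/k$ may tend to $0$, so $p$ need not be bounded away from zero. Using $1-e^{-x}\ge x/2$ for $x\le1$, we get $p\ge \min\{1/2,\ \alpha C m_n/(2k)\}$.

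If $p\ge1/2$, then $np\ge n/2\gg m_n$, and we conclude as before.

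Otherwise $np\ge \alpha C m_n n/(2k)\ge (\alpha C/2)m_n$. Choose $C$ so that $\alpha C/2\ge 4$; then $\mathbb{E}S\ge 4m_n$ and $\mathrm{Var}\,S\le\mathbb{E}S$. Chebyshev gives
\[
\Pr(S\le m_n)\le \Pr\bigl(|S-\mathbb{E}S|\ge \tfrac34\mathbb{E}S\bigr)\le \frac{16}{9\,\mathbb{E}S}\le \frac{16}{9\cdot 4m_n}.
\]
This is at most $\epsilon$ once $m_n$ is large. But $m_n\ge1$ may stay bounded, so the bound must be uniform in $m_n$. Instead, choose $C$ depending on $\epsilon$ so that $\mathbb{E}S\ge (\alpha C/2)m_n\ge \alpha C/2$ is large. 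Then $\Pr(S\le m_n)\le \mathrm{Var}\,S/(\mathbb{E}S-m_n)^2\le 4/\mathbb{E}S$ once $\mathbb{E}S\ge 2m_n$, and this is small for large $C$ regardless of $m_n$.

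\emph{Conclusion.} Combining the two cases, for every $\epsilon>0$ there is a constant $C_\epsilon$ such that $\Pr\bigl(k\,\eta_{(m_n+1)}>C_\epsilon\bigr)\le\epsilon$ for all large $n$. This is exactly $(n-m_n)\eta_{(m_n+1)}=O_p(1)$.
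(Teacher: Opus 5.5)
Your proof is correct and follows the paper's argument essentially step for step. It uses the same binomial count $S$ of agents with $\eta_i\le C/(n-m_n)$, the same bound $p\ge1-\exp(-\alpha C m_n/(n-m_n))$, the same split into $\lambda>0$ (linear margin) and $\lambda=0$, and in the latter case the same observation that $\mathbb{E}S\ge \alpha C m_n/2\ge \alpha C/2$ yields a bound uniform in $m_n$.

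The only differences are cosmetic. You use Chebyshev where the paper uses a Chernoff bound, giving $8/(\alpha C)$ instead of $e^{-\alpha C/16}$. Your side condition should read $t\le1$ rather than $t\le1/\alpha$, which holds for large $n$ anyway.
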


The preceding lemmas contain all probabilistic input for the sparse case.

\paragraph{Proof overview for $q=0$.}
For the upper bound, complementary slackness gives zero dual potentials for empty bundles and their matched agents in Algorithm~\ref{alg:balanced-compensation}; Lemma~\ref{lem:sparse-uniform-deficiency} then controls the construction's total payment uniformly over the possible empty-recipient set. For the lower bound, all empty-bundle agents must receive the same payment, and among the agents identified by Lemma~\ref{lem:sparse-critical-row}, at least one must be empty. This forces every EF outcome to incur the same leading payment as the construction, up to a tight random constant-order error.

\begin{remark}
The $O_p(1)$ error in the $q=0$ case of Theorem~\ref{thm:monetary-nonzero-remainders} is sharp. To see this, consider a single good and $n$ agents with valuations drawn from the uniform distribution on $[0,1]$. The optimal allocation assigns the good to the highest-valuing agent. All remaining $n-1$ agents receive empty bundles and must therefore receive the same payment; otherwise, an empty-bundle agent receiving less money would envy another empty-bundle agent. In an optimal solution, the receiver gets no payment, while the common payment to every other agent equals the second-highest value of the good. Let $v_{(n-1)}$ denote the second-largest of the $n$ valuations. Then $\befmon=(n-1)v_{(n-1)}$, and $(n-1)-\befmon=(n-1)(1-v_{(n-1)})$ converges in distribution to a $\mathrm\Gamma(2,1)$ random variable.
\end{remark}

\subsection{The case $q\ge1$}

We assume $q\ge1$ throughout this subsection. The next three lemmas also apply when $r_n=0$, as needed in Section~\ref{sec:monetary-exact-divisibility}.

Every balanced bundle is nonempty. The preceding argument for obtaining zero dual potentials from matched empty bundles therefore no longer applies. For the upper bound, we show that the dual bundle potentials, after accounting for bundle sizes, are nearly constant. For the lower bound, we distinguish two cases according to the total excess above bundle size $q$. Many excess goods provide many disjoint $(q+1)$-tuples, whereas few excess goods first produce one large payment that can then be propagated through bundles of size $q$. We begin with the dual estimate that controls the payment of Algorithm~\ref{alg:balanced-compensation}.

\begin{lemma}
\label{lem:positive-dual-oscillation}
Fix a deterministic partition $P_1,\ldots,P_n$ in which every bundle has size $q$ or $q+1$. For any maximum-weight perfect matching between agents and these bundles, write $i_j$ for the agent matched to $P_j$, and let $(\phi,\psi)$ be any complementary optimal solution to the corresponding assignment dual~\eqref{eq:balanced-dual}. Define $\theta_j:=\psi_j-|P_j|$. There exists a constant $C>0$, depending only on $q$ and $\alpha$, such that, w.h.p.,
\[
    \max_j\theta_j-\min_j\theta_j
    \le C n^{-1/(q+1)}(\log n)^{1+1/(q+1)}=o(1).
\]
This event holds simultaneously for every choice of the maximum-weight matching and complementary optimal dual solution.
\end{lemma}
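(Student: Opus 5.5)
The plan is to turn complementary slackness into a shortest-path bound on the oscillation of $\theta$. Then I would control path lengths through an expansion property of a random bipartite "near-full-value" graph. That property does not depend on which matching or which dual solution is chosen, which is what gives the simultaneity claim.

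\emph{Step 1 (deterministic reduction).} Let $j\mapsto i_j$ be the maximum-weight matching. Complementary slackness gives $\phi_{i_j}+\psi_j=v_{i_j}(P_j)$, and dual feasibility gives $\phi_{i_j}+\psi_l\ge v_{i_j}(P_l)$. Subtracting yields $\psi_j-\psi_l\le v_{i_j}(P_j)-v_{i_j}(P_l)$. Subtracting sizes, this becomes
\[
\theta_j-\theta_l\le \bigl(v_{i_j}(P_j)-|P_j|\bigr)+\bigl(|P_l|-v_{i_j}(P_l)\bigr)\le |P_l|-v_{i_j}(P_l),
\]
because values lie in $[0,1]$ and so the first bracket is nonpositive. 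Define a digraph $D$ on bundles with an arc $j\to l$ of length $d_{jl}:=|P_l|-v_{i_j}(P_l)\ge0$. Summing along a directed path shows that $\theta_k-\theta_j$ is at most the $D$-distance from $k$ to $j$. It therefore suffices to show that, w.h.p., every ordered pair of bundles is joined by a directed path of $O(\log n)$ arcs, each of length at most $\delta:=(K\log n/n)^{1/(q+1)}$, with $K$ a large constant. That gives oscillation at most $O(\delta\log n)$, which is exactly the claimed rate.

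\emph{Step 2 (random graph).} Let $G$ be the bipartite graph on agents $\times$ bundles with an edge $(i,l)$ when $|P_l|-v_i(P_l)\le\delta$. The bundles are disjoint, so these edge events are independent across all pairs. By PDF-boundedness each item has $\Pr(1-v_i(g)\le\delta/(q+1))\ge\alpha\delta/(q+1)$. Hence each edge has probability $p\ge c_q\alpha^{q+1}\delta^{q+1}=c\,K\log n/n$. In $D$, the short out-arcs of bundle $j$ go to the $G$-neighbours of agent $i_j$. The matching is a bijection, so the following two graph properties suffice for any matching.

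(a) Every set $S$ of at most $n/4$ agents has at least $\min\{2|S|,\,n/2+1\}$ neighbour-bundles.

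(b) Every set $T$ of at most $n/4$ bundles has at least $\min\{2|T|,\,n/2+1\}$ neighbour-agents.

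From $k$, repeatedly apply (a) to the agents matched to the current reachable set. The forward short-arc reach then doubles until it exceeds $n/2$, which takes $O(\log n)$ steps. Likewise, (b) makes the backward reach of $j$ exceed $n/2$ in $O(\log n)$ steps. The two reachable sets intersect, which yields the required path.

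\emph{Step 3 (union bound; main obstacle).} The delicate part is proving (a) and (b) for all set sizes at once, including sizes near $n/4$. Property (a) fails only if some $S$ of size $s$ has all its neighbours inside some set $W$ of size $\min\{2s,n/2\}$. So the probability of failure is at most
\[
\sum_s\binom{n}{s}\binom{n}{2s}(1-p)^{s(n-2s)}\le\sum_s\exp\bigl(3s\log n-s\,cK\log n/2\bigr),
\]
using $n-2s\ge n/2$. Taking $K$ large makes this $o(1)$. Property (b) is handled symmetrically. The constant $C$ then depends only on $q$, $\alpha$, and the chosen $K$. Since (a) and (b) are properties of $G$ alone, the bound holds simultaneously for every maximum-weight matching and every complementary optimal dual.
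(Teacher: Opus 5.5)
Your proof is essentially the paper's. It uses the same independent ``high-edge'' bipartite graph at threshold $(K\log n/n)^{1/(q+1)}$, union-bound expansion that holds for every bijection, contraction along the matching to an $O(\log n)$-diameter digraph, and the arc inequality $\theta_j-\theta_l\le\delta$ obtained from complementary slackness and $v_{i_j}(P_j)\le|P_j|$.

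There is one off-by-one slip to fix. Because (a) and (b) are stated only for $|S|\le n/4$, the minimum there is always $2|S|$. The reach is then only guaranteed to be at least $2\lfloor n/4\rfloor$, which need not exceed $n/2$, so the forward and backward sets could be disjoint. There are two easy repairs:
\begin{itemize}
\item Allow $|S|\le n/4+1$ in (a) and (b). Your union bound is unchanged, since the excluded set $W$ still has size at most $n/2$.
\item Alternatively, do as the paper does: add the property that any agent set and bundle set of size at least $n/4$ span an edge. This fails with probability at most $2^{2n}e^{-pn^2/16}=o(1)$.
\end{itemize}
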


For the lower bound, the allocation may depend on the entire valuation instance. We therefore need estimates that hold uniformly over all relevant families of goods, rather than estimates for a family fixed independently of the valuations. The first such estimate applies when we can extract many disjoint $(q+1)$-tuples from bundles containing at least $q+1$ goods.

\begin{lemma}
\label{lem:positive-large-coverage}
Let $h_n:=\lceil n^{2/3}\rceil$. There exists a constant $\gamma_1>0$, depending only on $q$ and $\alpha$, such that, for $\eta_n:=\gamma_1n^{-1/(3(q+1))}$, the following holds w.h.p. For every family $\mathcal T$ of at least $\lceil h_n/(q+1)\rceil$ pairwise disjoint $(q+1)$-subsets of goods, all but at most $h_n$ agents value every good in at least one member of $\mathcal T$ at least $1-\eta_n$.
\end{lemma}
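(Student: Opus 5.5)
The plan is a direct first-moment (union bound) argument over all candidate families and all candidate sets of exceptional agents. First I reduce to families of exactly $k_n:=\lceil h_n/(q+1)\rceil$ members. If $\mathcal T$ has more members, any subfamily $\mathcal T'\subseteq\mathcal T$ of size $k_n$ is still pairwise disjoint. An agent who fully values (at least $1-\eta_n$ on every good) some member of $\mathcal T'$ also does so for $\mathcal T$. So it suffices to prove the claim simultaneously for all families of exactly $k_n$ disjoint $(q+1)$-subsets.

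Call an agent $i$ \emph{bad} for $\mathcal T$ if every member $T\in\mathcal T$ contains a good $g$ with $v_i(g)<1-\eta_n$. Fix a family $\mathcal T$ and an agent $i$. Because the members of $\mathcal T$ are disjoint, the events ``$i$ values every good of $T$ at least $1-\eta_n$'' are independent across $T\in\mathcal T$. By PDF-boundedness, each such event has probability at least
\[
p_n:=(\alpha\eta_n)^{q+1}=(\alpha\gamma_1)^{q+1}n^{-1/3},
\]
provided $\eta_n\le1$, which holds for large $n$ for any fixed $\gamma_1>0$. Hence $\Pr(i\text{ bad})\le(1-p_n)^{k_n}\le e^{-p_nk_n}$.

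Distinct agents have independent valuation rows. Therefore, for any fixed set $S$ of $h_n+1$ agents,
\[
\Pr(\text{all of }S\text{ bad for }\mathcal T)\le e^{-p_nk_n(h_n+1)}.
\]
The failure event says that some $\mathcal T$ has more than $h_n$ bad agents, so it is contained in the union over pairs $(\mathcal T,S)$. For the counting, the number of families is at most $m_n^{(q+1)k_n}$, and $m_n\le(q+1)n$. The number of sets $S$ is at most $n^{h_n+1}$. The failure probability is therefore bounded by
\[
\exp\Bigl((q+1)k_n\log((q+1)n)+(h_n+1)\log n-p_nk_n(h_n+1)\Bigr).
\]

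The main step, and the only one needing care, is checking that the negative term dominates. We have $(q+1)k_n\le h_n+q+1$, so the two positive terms are $O(h_n\log n)$. The negative term is at least
\[
p_nh_n^2/(q+1)=\frac{(\alpha\gamma_1)^{q+1}}{q+1}\,n^{-1/3}h_n^2\ge c\,h_n n^{1/3},
\]
for a constant $c>0$. Since $n^{1/3}\gg\log n$, the exponent is $-\Omega(h_nn^{1/3})\to-\infty$, so the failure probability tends to zero. Any fixed $\gamma_1>0$ works, for instance $\gamma_1=1$; it depends at most on $q$ and $\alpha$. The scalings were chosen to balance exactly here: $h_n=n^{2/3}$ and $\eta_n^{q+1}\asymp n^{-1/3}$ make $p_nh_n$ polynomially large. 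This absorbs the $\log n$ entropy cost per agent and per family member.
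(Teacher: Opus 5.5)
Your proposal is correct and follows essentially the same argument as the paper: reduce to subfamilies of exactly $\lceil h_n/(q+1)\rceil$ tuples, and use disjointness and PDF-boundedness to bound one agent's failure probability by $\exp(-\Theta(n^{1/3}))$. Row independence then gives $\exp(-\Theta(n))$ for a prescribed exceptional set, and a union bound over the $\exp(O(n^{2/3}\log n))$ choices of agents and families finishes the proof; your use of sets of $h_n+1$ agents matches the failure event exactly, and any fixed $\gamma_1>0$ indeed works.
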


When the total excess above bundle size $q$ is small, we use a single high $(q+1)$-tuple to create a \emph{seed} payment close to one. High $q$-subsets then transmit this payment first to many agents and subsequently to all but a negligible number of agents. The next lemma records the three uniform estimates supporting these two propagation steps.

\begin{lemma}
\label{lem:positive-small-propagation}
Set $h_n:=\lceil n^{2/3}\rceil$. There exist positive constants $\gamma_2$, $\gamma_3$, and $\kappa_0$, depending only on $q$ and $\alpha$. With
\[
    \eta_n':=\gamma_2n^{-1/(3(q+1))},\qquad
    \zeta_n:=\gamma_3n^{-1/(6q)},\qquad
    \ell_n:=\lfloor\kappa_0n^{5/6}\rfloor,
\]
the following properties hold simultaneously w.h.p.:
\begin{enumerate}
    \item every $(q+1)$-subset of goods has at least $4h_n$ agents who value each of its goods at least $1-\eta_n'$;
    \item every $q$-subset of goods has at least $\ell_n$ agents who value each of its goods at least $1-\zeta_n/q$;
    \item for every family of at least $\lceil\ell_n/2\rceil$ pairwise disjoint $q$-subsets, all but at most $\lceil n^{1/3}\rceil$ agents value every good in at least one member of the family at least $1-\zeta_n/q$.
\end{enumerate}
\end{lemma}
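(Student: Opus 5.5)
The plan is to prove each of the three properties by a first-moment / Chernoff estimate followed by a union bound, and then to intersect the three events. The only distributional input needed is a per-agent lower bound. By PDF-boundedness, for every agent $i$, every good $g$ and every $t\in[0,1]$ we have $\Pr(v_i(g)\ge1-t)\ge\alpha t$. Values are independent across agents and goods, so for a fixed $k$-subset $S$ of goods the events
\[
E_i(S,t):=\{v_i(g)\ge1-t\text{ for all }g\in S\}
\]
are independent across $i$, and each has probability at least $(\alpha t)^k$. We also use $m_n\le(q+1)n$ throughout, so the number of $k$-subsets of goods is at most $((q+1)n)^k$. Finally, for large $n$ the thresholds $\eta_n'$ and $\zeta_n/q$ lie in $[0,1]$.

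\emph{Property 1.} Fix a $(q+1)$-subset $S$ and let $Z_S$ be the number of agents $i$ for which $E_i(S,\eta_n')$ holds. Then $Z_S$ stochastically dominates $\mathrm{Bin}(n,p)$, where
\[
p=(\alpha\gamma_2)^{q+1}n^{-1/3},\qquad np=(\alpha\gamma_2)^{q+1}n^{2/3}.
\]
Choose $\gamma_2$ so that $(\alpha\gamma_2)^{q+1}\ge 16$. Then $np\ge 8h_n$ for large $n$, since $h_n\le n^{2/3}+1$. A multiplicative Chernoff bound gives $\Pr(Z_S<4h_n)\le\exp(-cn^{2/3})$. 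A union bound over at most $((q+1)n)^{q+1}$ subsets leaves a polynomial times $\exp(-cn^{2/3})$, which tends to $0$.

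\emph{Property 2.} This is identical, with $q$-subsets and threshold $t=\zeta_n/q$. The success probability is
\[
p'=(\alpha\gamma_3/q)^q n^{-1/6},
\]
so the mean count is $c_0n^{5/6}$ with $c_0:=(\alpha\gamma_3/q)^q$. Take any $\gamma_3>0$ and set $\kappa_0:=c_0/2$. Chernoff then gives failure probability $\exp(-\Omega(n^{5/6}))$ per subset, and the union over at most $((q+1)n)^q$ subsets finishes the property.

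\emph{Property 3.} I expect this to be the main step, because the union is now over families of subsets as well as sets of agents. First reduce to families of exactly $k_0:=\lceil\ell_n/2\rceil$ disjoint $q$-subsets: shrinking a family can only enlarge the set of agents who fail to cover it, so it suffices to handle subfamilies of that size. Fix such a family $\mathcal F$ and an agent $i$. Disjointness makes the events $E_i(S,\zeta_n/q)$ for $S\in\mathcal F$ independent, so
\[
\Pr(i\text{ fails every member})\le(1-p')^{k_0}\le\exp(-p'k_0)=\exp(-c_1n^{2/3}),
\]
since $p'k_0\asymp n^{-1/6}\cdot n^{5/6}$. Let $s:=\lceil n^{1/3}\rceil+1$ and fix a set $W$ of $s$ agents. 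The failures of distinct agents are independent, so the probability that all of $W$ fail is at most $\exp(-c_1sn^{2/3})\le\exp(-c_1n)$.

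The counting must balance against this. The number of families is at most $((q+1)n)^{qk_0}=\exp(O(n^{5/6}\log n))$, and the number of sets $W$ is at most $n^s=\exp(O(n^{1/3}\log n))$. The total failure probability is therefore at most
\[
\exp\bigl(-c_1n+O(n^{5/6}\log n)\bigr)\to0.
\]
The exponents $n^{2/3}$ (per-agent decay) and $n^{1/3}$ (exceptional agents) are chosen precisely so that their product is linear in $n$, which beats the $n^{5/6}\log n$ entropy of the families. A final union bound over the three failure events gives all three properties simultaneously w.h.p.
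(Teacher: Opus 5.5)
Your proposal is correct and matches the paper's proof step by step. For the first two properties both arguments use binomial domination, a Chernoff bound, and a polynomial union bound; for the third, both reduce to families of exactly $\lceil\ell_n/2\rceil$ members, bound each agent's failure probability by $\exp(-\Theta(n^{2/3}))$, and let the joint failure bound $\exp(-\Theta(n))$ beat the $\exp(O(n^{5/6}\log n))$ count of choices. Taking $\lceil n^{1/3}\rceil+1$ prescribed agents for the event that more than $\lceil n^{1/3}\rceil$ agents fail is, if anything, slightly more precise than the paper's count.
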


Together, the three lemmas above provide all probabilistic estimates for the case $q\ge1$.

\paragraph{Proof overview for $q\ge1$.}
For the upper bound, Lemma~\ref{lem:positive-dual-oscillation} makes the adjusted bundle potentials nearly constant, so the payment identity in Lemma~\ref{lem:balanced-compensation} is governed by the number of smaller bundles. For the lower bound, consider any EF allocation and split according to the total excess above bundle size $q$. If the excess is large, extract many disjoint $(q+1)$-tuples and apply Lemma~\ref{lem:positive-large-coverage} to force payments broadly. If the excess is small, one bundle containing at least $q+1$ goods creates a near-unit seed payment, and Lemma~\ref{lem:positive-small-propagation} transmits this requirement through the size-$q$ bundles to almost all agents. In both cases, the lower bound matches the algorithm's leading payment.

\section{Exact divisibility}
\label{sec:monetary-exact-divisibility}

The following theorem characterizes the minimum subsidy when $m_n$ is divisible by $n$.

\begin{theorem}\label{thm:monetary-exact-divisibility}
    Fix an integer $q\ge 0$ and suppose that $m_n=qn$. Then
    \[
        \befmon=
        \begin{cases}
            0,                       & q=0,\\
            \Theta_p(\log n),        & q=1,\\
            0 \quad \text{w.h.p.},  & q\ge 2.
        \end{cases}
    \]
\end{theorem}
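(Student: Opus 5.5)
For $q=0$ there are no goods, so every agent holds $\varnothing$ and the zero subsidy vector is already EF; hence $\befmon=0$ deterministically. For $q\ge2$ I will simply invoke the theorem of Manurangsi and Suksompong~\cite{manurangsi2020envy}: when $n\mid m_n$ and $m_n\ge2n$, an EF allocation of the goods alone exists w.h.p. under PDF-boundedness, so $X=0$ is feasible and $\befmon=0$ w.h.p. All of the work is therefore in the square case $q=1$, $m_n=n$. I would split it into two steps: (i) a reduction showing that, w.h.p., every optimal EF outcome uses a singleton allocation; and (ii) a $\Theta_p(\log n)$ estimate for the singleton optimum.

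\textbf{Step (i): reduction to singleton allocations.} A singleton outcome costs at most $n-1$ (Brustle et al.). Anticipating step (ii), I will compare against the much smaller bound $O_p(\log n)$. Suppose an EF allocation is not singleton. Since $m_n=n$, it has a bundle with at least two goods, and correspondingly at least one empty bundle. I will reuse the lower-bound machinery of the $q\ge1$ case with $q=1$, $r_n=0$; Lemmas~\ref{lem:positive-large-coverage} and~\ref{lem:positive-small-propagation} are stated to allow $r_n=0$.
\begin{itemize}
\item If the total excess above size one is large, I extract $\lceil h_n/2\rceil$ disjoint pairs and apply Lemma~\ref{lem:positive-large-coverage}. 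All but $h_n$ agents then envy someone holding a pair worth nearly $2$, so their payments are large.
\item If the excess is small, one pair bundle yields, by property~1 of Lemma~\ref{lem:positive-small-propagation}, many agents each of whom needs a seed payment close to one relative to that holder. Properties~2--3 then propagate this requirement through the singleton bundles.
\end{itemize}
Either way, the total subsidy is $\Omega(n^{1-o(1)})$, which is far above $O_p(\log n)$. Hence, w.h.p., the unrestricted optimum equals the singleton optimum. The one adjustment I need is to the accounting: the empty-bundle agents, rather than bundles of size $q$, must pay the deficit.

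\textbf{Step (ii): the singleton optimum.} By Halpern--Shah, the singleton allocation must be a maximum-weight perfect matching $\mu$. For that matching, the minimum payments are longest-path lengths in the envy graph, which is equivalent to using the minimal complementary dual prices $p$ on goods. The total is then $\sum_i(\max p-p_{\mu(i)})$, mirroring~\eqref{eq:balanced-payment}. Ties among maximum matchings have probability zero, so the minimum over matchings is attained at the unique $\mu$.

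For the upper bound, I will build explicit feasible potentials. W.h.p.\ every agent's matched value is $1-O(\log n/n)$. I then bound $\max p-p_j$ via shortest alternating paths, in the spirit of the argument for Lemma~\ref{lem:positive-dual-oscillation}. The aim is to show that, at distance $t/n$ below the top, the number of goods with price gap exceeding $t/n$ decays like $n/t$. This yields a harmonic sum $\sum_i(\max p-p_i)\approx\sum_{k\le n}1/k=O_p(\log n)$.

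For the lower bound, I will exhibit a record structure. Order agents by how high their best value is. The agent owning the top-priced good forces everyone who envies that good to be paid; recursively, records of running maxima along the order produce gaps of order $1/k$ that must all be paid. This should give $\Omega_p(\log n)$ with constant probability $1-\epsilon$.

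\textbf{Main obstacle.} I expect the sharp two-sided $\log n$ scaling of the dual-price spread in the random assignment to be the hardest part, and in particular the lower bound, where dependence on the whole matrix prevents direct use of independence. I would first try a coupling with an exponential-spacings model near the top of each agent's values. That model is justified by PDF-boundedness, which makes the top order statistics comparable to exponentials with rates in $[\alpha,\beta]$.
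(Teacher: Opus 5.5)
\paragraph{What matches the paper.} Your treatment of $q=0$, $q\ge2$, and the reduction in step (i) is essentially the paper's. You apply Lemmas~\ref{lem:positive-large-coverage} and~\ref{lem:positive-small-propagation} with $q=1$, $r_n=0$, and the empty bundles absorb the deficit via $\sum_{|A_i|\le1}(2-|A_i|)=n+e-b\ge n$. One fix: do not let the reduction rest on the $O_p(\log n)$ bound from step (ii). As in the paper, Lemmas~\ref{lem:balanced-compensation} and~\ref{lem:positive-dual-oscillation} applied to the singleton partition already give $B_{\mathrm{sing}}=O(\sqrt n(\log n)^{3/2})=o_p(n)$ w.h.p., and this suffices to beat the $n-o(n)$ cost of every nonsingleton EF outcome.

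\paragraph{The gap: step (ii) targets a false picture.} Step (ii) is the whole substance of the $q=1$ case, and you give only a plan. Your upper bound aims to show that $\#\{i:x_i>t/n\}$ decays like $n/t$, so that the budget is a harmonic sum. Your lower bound looks for records with gaps of order $1/k$. In fact the minimal payment vector is nearly flat: only $K=\Theta_p(\sqrt n)$ roots receive zero, and almost every other agent receives about $\frac{\log n}{2n}$.

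Concretely, take the uniform distribution and write $B=B_{\mathrm{sing}}$. Bound $F(1-y)\le e^{-y}$ in~\eqref{eq:choice-integral} and use $\sum_{\ell\ne i}(x_\ell-u)_+\ge B-x_i-(n-1)u$. This gives $(n-1)p_i\le e^{-B}e^{nx_i}$ for every $i$, where $B=\frac12\log n+O_p(1)$. By Lemma~\ref{lem:square-choice-share-representation}, all but $O_p(n^{3/4})$ agents have $np_i\ge n^{-1/4}$. For all of them, $x_i\ge\frac{\log n}{4n}-O_p(1/n)$. So at $t=\frac15\log n$ the count $\#\{i:x_i>t/n\}$ is $n-o_p(n)$, not $O(n/t)$. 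The $\log n$ arises from $n$ payments of size $\Theta(\log n/n)$, not from a harmonic sum, and there is no record structure to exploit.

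Nor does the expansion argument of Lemma~\ref{lem:positive-dual-oscillation} reach $O(\log n)$, even tuned to singleton bundles. It bounds the spread by (edge threshold)$\times$(diameter). Every agent needs a high edge, so the threshold must be $\gtrsim\log n/n$, and then the degrees are $O(\log n)$ and the diameter is $\gtrsim\log n/\log\log n$. This yields only an $O(\log^2 n/\log\log n)$ total.

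\paragraph{The missing idea.} In the paper, the $\log n$ is essentially $\log(n/K)$. Lemma~\ref{lem:square-share-budget-comparison} ties $B_{\mathrm{sing}}$ to $T_n=-\log(np_{\min})$ through the root-share integral~\eqref{eq:root-share-integral} and the density bounds; this is where the constants $1/\beta$ and $1/\alpha$ come from. Lemma~\ref{lem:square-root-face-change} then changes variables from payments to choice shares. There the matrix--tree Laplacian determinant in the payment density cancels the Jacobian of the choice map, leaving the face density $n!\,p_{\min}^K$, independent of $\mathcal D$. From this come two facts:
\begin{itemize}
\item the exact law $\Pr(K\ge k)=\prod_{j<k}(1-j/n)$, hence $K=\Theta_p(\sqrt n)$;
\item the conditional Dirichlet representation, hence $np_{\min}\approx K/n$ and $T_n=\frac12\log n+O_p(1)$.
\end{itemize}
An exponential-spacings coupling near the top of each row supplies neither ingredient. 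The real difficulty is the global dependence of the optimal matching and its minimal payments on the whole matrix, and the change of variables handles that dependence exactly.
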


The case $q=0$ is immediate, and the case $q\ge2$ follows from the zero-payment EF existence result of Manurangsi and Suksompong~\cite{manurangsi2020envy}. We therefore focus on the square case $q=1$ and assume $m_n=n$ throughout the remainder of this section. We first compute the minimum payment among singleton allocations and show that it equals the unrestricted optimum w.h.p. We then establish the logarithmic bounds in Theorem~\ref{thm:square-logarithmic-scale}.

\subsection{Singleton reduction and computation}

We first consider the special case of \emph{singleton allocations}, in which every agent receives exactly one good. Equivalently, a singleton allocation is a perfect matching between the agents and the goods. Let $B_{\mathrm{sing}}$ denote the minimum total payment among all EF singleton allocations. Throughout the rest of this section, after computing a maximum-weight perfect matching $\sigma:N\to M$, we relabel the goods without loss of generality so that $\sigma(i)=i$ for all $i\in N$, retaining the notation $v_i(j)$ for the relabeled valuations. Algorithm~\ref{alg:square-singleton-compensation} finds an EF singleton allocation with minimum total payment.

\begin{algorithm}[tbp]
\caption{EF allocation and compensation via shortest paths}
\label{alg:square-singleton-compensation}
\begin{algorithmic}[1]
    \Require Valuation matrix $(v_i(j))_{i\in N,j\in M}$, where
    $|N|=|M|=n$
    \Ensure An EF singleton outcome $(A,X)$ with $X\ge0$ and minimum total payment
    \State Compute a maximum-weight perfect matching $\sigma:N\to M$ under
    the valuations $v_i(j)$
    \State Set $A_i\gets\{\sigma(i)\}$ for every $i\in N$
    \State Relabel good $\sigma(i)$ as good $i$ for every $i\in N$, retaining $v_i(j)$ for the relabeled valuations
    \State Set $\ell(i,j)\gets v_i(i)-v_i(j)$ for every
    $i,j\in N$
    \State Compute all-pairs shortest-path distances $d(i,j)$ under the edge
    lengths $\ell$
    \State Set $x_i\gets-\min_{j\in N}d(i,j)$ for every $i\in N$
    \State \Return $A$ and $X$
\end{algorithmic}
\end{algorithm}

The next lemma establishes the correctness and running time of Algorithm~\ref{alg:square-singleton-compensation}.

\begin{lemma}\label{lem:square-singleton-computation}
    The maximum-weight perfect matching $\sigma$ is unique a.s. Let $X=(x_i)_{i\in N}$ be the coordinatewise minimal solution to
    \begin{equation}\label{eq:square-difference-constraints}
        x_i\ge 0,
        \qquad
        x_i-x_j
        \ge v_i(j)-v_i(i)
        \quad\text{for all }i,j\in N.
    \end{equation}
    Then
    \begin{equation}\label{eq:square-singleton-budget}
        B_{\mathrm{sing}}=\sum_{i\in N}x_i.
    \end{equation}
    Moreover, in the complete directed graph on $N$ with edge weight
    \(
        w(i,j):=v_i(j)-v_i(i),
    \)
    we have
    \begin{equation}\label{eq:square-longest-path}
        x_i
        =\max\{\text{weight of a simple directed path starting at $i$}\},
    \end{equation}
    where the empty path has weight zero. On this probability-one event, Algorithm~\ref{alg:square-singleton-compensation} returns an EF singleton allocation with minimum total payment in $O(n^3)$ time.
\end{lemma}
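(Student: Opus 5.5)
The argument has four parts, taken in order: uniqueness of the matching, the Halpern--Shah characterization of envy-freeable singleton allocations, the difference-constraint system for a fixed matching, and the longest-path formula with its shortest-path computation.

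\textbf{Uniqueness.} For two distinct perfect matchings $\sigma\neq\sigma'$, the weight difference $\sum_i v_i(\sigma(i))-\sum_i v_i(\sigma'(i))$ is a nonzero integer combination of the independent, absolutely continuous entries $v_i(j)$. Such a combination has a continuous distribution, so it equals zero with probability zero. There are finitely many pairs of matchings, so a union bound shows that the maximum-weight perfect matching is unique almost surely.

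\textbf{Reduction to the matching $\sigma$.} By Halpern and Shah~\cite{halpern2019subsidy}, a singleton allocation admits envy-eliminating payments only if it maximizes welfare over all reassignments of its bundles. For singletons, reassigning the bundles means choosing another perfect matching. Therefore, on the uniqueness event, the only envy-freeable singleton allocation is $\sigma$, and after relabeling this is $A_i=\{i\}$.

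\textbf{Payments for $\sigma$.} For this allocation, the EF constraints $v_i(i)+x_i\ge v_i(j)+x_j$ together with $x\ge0$ are exactly the system~\eqref{eq:square-difference-constraints}. Define $x_i^*$ as the maximum weight of a directed path starting at $i$ in the graph with weights $w(i,j)=v_i(j)-v_i(i)$.
\begin{itemize}
\item \emph{No positive cycles.} By optimality of $\sigma$, every cycle has weight at most zero. A positive cycle $i_1\to\cdots\to i_k\to i_1$ would mean that rotating the goods along it increases total welfare. Hence longest walks equal longest simple paths, and $x_i^*$ is finite and nonnegative (take the empty path).
\item \emph{Feasibility.} Prepending the edge $(i,j)$ to a best path from $j$ gives $x_i^*\ge w(i,j)+x_j^*$. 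If the best path from $j$ passes through $i$, the resulting walk contains a cycle; since that cycle has nonpositive weight, removing it gives a simple path from $i$ of at least the same weight. So $x^*$ satisfies~\eqref{eq:square-difference-constraints}.
\item \emph{Minimality.} Let $x$ be any feasible solution and let $i=i_0\to i_1\to\cdots\to i_k$ be a path. Summing the constraints along the path and using $x_{i_k}\ge0$ gives $x_i\ge \text{weight of the path}$. Hence $x\ge x^*$ coordinatewise.
\end{itemize}
So $x^*$ is the coordinatewise minimal solution. Any feasible $x$ has total at least $\sum_i x_i^*$, and $x^*$ attains this total, which proves~\eqref{eq:square-singleton-budget} and~\eqref{eq:square-longest-path}.

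\textbf{Algorithm and running time.} The algorithm uses edge lengths $\ell=-w$. Because every cycle has nonpositive $w$-weight, there are no negative $\ell$-cycles, so shortest-path distances are well defined. The quantity $-\min_j d(i,j)$ is then the longest path weight from $i$. Here $d(i,i)=0$ accounts for the empty path, and the no-negative-cycle property lets us restrict to simple paths.

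For the running time, the Hungarian algorithm computes $\sigma$ in $O(n^3)$ time. Floyd--Warshall computes all-pairs distances in $O(n^3)$ time, and it is valid with negative edge lengths when no negative cycles exist. The remaining steps take $O(n^2)$ time.

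The step needing the most care is the cycle-nonpositivity argument, which simultaneously guarantees well-defined distances and the reduction from walks to simple paths.
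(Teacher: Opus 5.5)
Your proposal is correct and follows essentially the same route as the paper: almost-sure uniqueness of the welfare-maximizing matching, the fact that any EF singleton allocation must maximize welfare, nonpositive cycle weights, the longest-path formula (feasibility by prepending an edge, minimality by telescoping), and Floyd--Warshall on the negated weights. The only minor difference is that you cite Halpern and Shah for the welfare-maximization step, whereas the paper proves it directly by summing the EF inequalities over an arbitrary reassignment so that the payments cancel.
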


\paragraph{Proof sketch.}
Summing the EF inequalities over any reassignment cancels the payments, so every EF singleton allocation must maximize welfare; absolute continuity makes this matching unique a.s. Its optimality rules out positive-weight cycles in the envy graph. Maximum path weights therefore give the coordinatewise least nonnegative payments, and negating the edge weights reduces their computation to all-pairs shortest paths.

Lemma~\ref{lem:square-singleton-computation} solves the restricted singleton problem. Since $m_n=n$, every nonsingleton allocation contains both an empty bundle and a bundle with at least two goods. The next lemma shows that these allocations are more expensive than optimal singleton allocations w.h.p.

\begin{lemma}\label{lem:square-singleton-reduction}
    Suppose that $m_n=n$. Then
    \begin{equation}\label{eq:square-singleton-reduction}
        \Pr\{\befmon=B_{\mathrm{sing}}\}\longrightarrow 1.
    \end{equation}
\end{lemma}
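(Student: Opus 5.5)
Since $\befmon\le B_{\mathrm{sing}}$ holds trivially, it suffices to show that w.h.p.\ every EF outcome $(A,X)$ with $A$ not a singleton allocation has $\sum_i x_i> B_{\mathrm{sing}}$. I plan to prove this by comparing two scales: an upper bound $B_{\mathrm{sing}}=O(\sqrt n\,(\log n)^{3/2})$ w.h.p., and a lower bound of order $n^{2/3}$ on the payment of every nonsingleton EF outcome, holding w.h.p.\ uniformly over all such outcomes.

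\emph{Upper bound on $B_{\mathrm{sing}}$.} Take $q=1$, $r_n=0$ in Algorithm~\ref{alg:balanced-compensation}. Every $P_j$ is then a singleton, so the output is a singleton allocation. By Lemma~\ref{lem:balanced-compensation} it is EF, and its payment is $n\max_j\psi_j-\sum_j\psi_j$. Since all bundles have size one, $\theta_j=\psi_j-1$ and this payment equals $\sum_j(\max_k\theta_k-\theta_j)\le n(\max\theta-\min\theta)$. By Lemma~\ref{lem:positive-dual-oscillation} (which, as noted in the paper, also applies for $r_n=0$), this is at most $Cn^{1/2}(\log n)^{3/2}$ w.h.p. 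Hence $B_{\mathrm{sing}}=o(n^{2/3})$ w.h.p.

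\emph{Lower bound for nonsingleton outcomes.} Fix a nonsingleton EF outcome and let $E:=\sum_i(|A_i|-1)^+\ge1$ be its excess. Because $m_n=n$, exactly $E$ agents receive empty bundles, and at most $E$ agents receive two or more goods. All empty-bundle agents must receive a common payment $x_0$, since otherwise one of them envies another. Moreover, for every agent $k$ and every empty-bundle agent $e$, EF gives $x_0\ge v_e(A_k)+x_k\ge v_e(A_k)$. I split into two cases with $h_n=\lceil n^{2/3}\rceil$.

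\emph{Case $E\ge h_n$.} Pick one pair of goods from each of $E$ disjoint blocks inside the multi-good bundles; this gives at least $\lceil h_n/2\rceil$ disjoint $2$-subsets. By Lemma~\ref{lem:positive-large-coverage} with $q=1$, all but $h_n$ agents value both goods of some such pair at least $1-\eta_n$. If $E>h_n$, some empty agent is covered, so $x_0\ge2-2\eta_n$ and the total payment is at least $E(2-2\eta_n)\ge h_n$. In the boundary case $E=h_n$, I apply the same argument with one fewer pair, or fold it into the next case.

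\emph{Case $1\le E<h_n$.} Let $A_k$ be a bundle with $|A_k|\ge2$ and $\{g,h\}\subseteq A_k$. By Lemma~\ref{lem:positive-small-propagation}(1) with $q=1$, at least $4h_n$ agents value both $g$ and $h$ at least $1-\eta_n'$. Of these, fewer than $h_n$ hold two or more goods. Each remaining agent $a$ holds at most one good, so $v_a(A_a)\le1$, and EF toward $k$ gives $x_a\ge v_a(g)+v_a(h)-v_a(A_a)\ge1-2\eta_n'$. Summing over at least $3h_n$ such agents yields a total payment of at least $3h_n(1-2\eta_n')\ge n^{2/3}$.

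Both probabilistic lemmas hold uniformly over all families and subsets of goods, so the lower bound holds simultaneously for every nonsingleton allocation on one w.h.p.\ event. Intersecting this event with the upper-bound event gives $\befmon=B_{\mathrm{sing}}$ w.h.p. The main point to check carefully is exactly this uniformity, together with the bookkeeping that the number of agents holding two or more goods is at most $E$. Unlike the $q\ge1$ analysis, the propagation step (parts 2 and 3 of Lemma~\ref{lem:positive-small-propagation}) should not be needed, because an $n^{2/3}$ seed already dominates the $\sqrt n$-scale bound on $B_{\mathrm{sing}}$.
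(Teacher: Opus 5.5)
Your proof is correct and takes a different route from the paper's. The paper proves that every nonsingleton EF outcome costs $n-o(n)$, and then needs only $B_{\mathrm{sing}}=o_p(n)$:
\begin{itemize}
\item with at least $h_n$ empty bundles, it sums $x_i\ge 2-|A_i|-2\eta_n$ over all covered agents with $|A_i|\le1$, using $\sum_{|A_i|\le1}(2-|A_i|)\ge n$;
\item with fewer empty bundles, it turns the seed $x_{i_0}\ge1-2\eta_n'$ into $x_i\ge x_{i_0}-2\zeta_n$ for almost all singleton owners, via parts 2 and 3 of Lemma~\ref{lem:positive-small-propagation}.
\end{itemize}
You instead use the explicit rate $O(\sqrt n(\log n)^{3/2})$ from Lemma~\ref{lem:positive-dual-oscillation}, so an $\Omega(n^{2/3})$ lower bound suffices. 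You get it in the first case from the common payment $x_0$ of the more than $h_n$ empty-bundle agents. In the second case you sum the seed bound over the at least $3h_n$ high-valuing agents who hold at most one good. This removes the propagation step entirely and uses only part 1 of Lemma~\ref{lem:positive-small-propagation}, so it is shorter and needs fewer uniform estimates. The price is that it depends on the exponent comparison $1/2<2/3$. The paper's argument gives the more informative fact that nonsingleton outcomes cost essentially $n$, which survives any $o(n)$ bound on $B_{\mathrm{sing}}$ and mirrors the lower bound in the $q\ge1$ case of Theorem~\ref{thm:monetary-nonzero-remainders}.

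Two small repairs are needed.
\begin{itemize}
\item The pair count should be justified by $\sum_{|A_k|\ge2}\lfloor|A_k|/2\rfloor\ge E/2$. You cannot in general take $E$ disjoint pairs; a single bundle of size $E+1$ is a counterexample.
\item In the boundary case $E=h_n$, dropping a pair cannot help, since fewer pairs never enlarge the covered set. Folding this case into your second case does work: that argument only needs at most $h_n$ agents holding two or more goods, which still leaves $3h_n$ agents. So split at $E>h_n$ versus $E\le h_n$.
\end{itemize}
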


\paragraph{Proof sketch.}
Apply Lemmas~\ref{lem:positive-large-coverage} and~\ref{lem:positive-small-propagation} with $q=1$ and $r_n=0$. Many empty bundles force payments through many disjoint pairs of goods; with few empty bundles, one pair creates a near-unit payment that propagates to almost all singleton owners. On their common high-probability event, these estimates make every nonsingleton EF outcome cost at least $n-o(n)$. In contrast, Lemmas~\ref{lem:balanced-compensation} and~\ref{lem:positive-dual-oscillation} provide a singleton outcome with $o(n)$ total payment w.h.p. This separation establishes the reduction without using the logarithmic estimates below.

\begin{remark}
For $m_n=n$, Algorithm~\ref{alg:square-singleton-compensation} refines Algorithm~\ref{alg:balanced-compensation} by replacing its dual-price payments with the minimal nonnegative EF payments supporting the same maximum-weight matching.
\end{remark}

\subsection{The logarithmic compensation scale}

By Lemmas~\ref{lem:square-singleton-computation} and \ref{lem:square-singleton-reduction}, it remains to estimate $B_{\mathrm{sing}}=\sum_{i\in N}x_i$. For a realized payment vector $X$, pair good $i$ with payment $x_i$ and take a fresh valuation vector $V=(V_1,\ldots,V_n)$, independent of the original instance, with i.i.d. coordinates drawn from $\mathcal D$. Define the choice share $p_i(X):=\Pr_V\{V_i+x_i=\max_{j\in N}(V_j+x_j)\}$. The probability is over $V$ with $X$ fixed, so $p(X)$ is a deterministic transform of $X$, random across original instances.

Minimality gives $\min_{i\in N}x_i=0$, since \eqref{eq:square-difference-constraints} involves only payment differences. Set $R:=\{i\in N:x_i=0\}$, $K:=|R|$, and $p_{\min}:=\min_{i\in N}p_i(X)$. The indices in $R$ are called \emph{roots}; they attain the smallest choice share.

Choice shares are useful because their Jacobian cancels a determinant in the payment density. Minimality gives each nonroot a nontrivial tight outgoing EF constraint, and uniqueness of the optimal matching rules out tight cycles, yielding a forest toward $R$ almost surely. For each fixed root set, summing over these forests via the matrix--tree theorem gives a Laplacian determinant. The same determinant is the reduced Jacobian determinant of the choice map, so changing variables leaves the face density $n!p_{\min}^K$, independent of $\mathcal D$. Appendix~\ref{app:square-logarithmic-proofs} gives the details and establishes $K=\Theta_p(\sqrt n)$. Conditioning on $R$ then yields the Gamma and exponential representation in Lemma~\ref{lem:square-choice-share-representation}.

\begin{lemma}
    \label{lem:square-choice-share-representation}
    \emergencystretch=2em
    Given the realized root set $R$, and hence $K=|R|$, let $Z_0\sim\operatorname{Gamma}(K+1,1)$ and $E_i\sim\operatorname{Exp}(1)$, $i\notin R$, be mutually independent, where the Gamma distribution uses the shape--rate parametrization, and set $S:=Z_0+\sum_{i\notin R}E_i$. Then the conditional distribution of the choice-share vector is that of the vector given by
    \[
        p_i=
        \begin{cases}
            \dfrac{Z_0}{nS}, & i\in R,\\[4pt]
            \dfrac{Z_0}{nS}+\dfrac{E_i}{S}, & i\notin R.
        \end{cases}
    \]
\end{lemma}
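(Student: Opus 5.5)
We will compute the joint density of the payment vector $X$ on each root face, change variables to choice shares, and read off the Gamma/exponential form by a Dirichlet-type identification. Throughout we work on the probability-one event of Lemma~\ref{lem:square-singleton-computation}.

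\emph{Step 1: density of the payments on a face.} Fix a root set $R$ with $|R|=K$. Condition on the event that $\sigma$ is the identity after relabeling; by symmetry the $n!$ matchings are equally likely, which is the source of the factor $n!$. By minimality, each nonroot $i$ has at least one tight constraint $x_i-x_j=v_i(j)-v_i(i)$. Uniqueness of the optimal matching rules out tight cycles, so choosing one tight out-edge per nonroot gives a spanning forest $F$ directed toward $R$, and almost surely this choice is unique. Given $F$, the payments are obtained by solving the tight equations. Integrate out the tight coordinates $v_i(j(i))$ for $i\notin R$, with the remaining entries constrained by the slack inequalities and the optimality of the matching. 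For $i\notin R$, the density contribution of row $i$ is then $f(v_i(i)+x_j-x_i)$ times the probability that the other entries of row $i$ satisfy $v_i(k)\le v_i(i)+x_i-x_k$. For root rows the constraint is simply $v_i(k)\le v_i(i)-x_k$.

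\emph{Step 2: summing forests and changing variables.} Next we integrate over the diagonal values $v_i(i)$ and sum over all forests $F$ rooted at $R$. For each row, integrating the diagonal entry against the product of CDFs should produce expressions of the form $\Pr_V\{V_i+x_i=\max_j(V_j+x_j)\}$ and its partial derivatives $-\partial p_i/\partial x_j$. After the forest sum, the matrix--tree theorem collapses the result to the determinant of the Laplacian with weights $-\partial p_i/\partial x_j$, reduced at $R$. Precisely, this is the Jacobian of $x_{N\setminus R}\mapsto p_{N\setminus R}$. Changing variables from $(x_i)_{i\notin R}$ to choice shares cancels this determinant. What remains is the factor $n!\,p_{\min}^K$, where $p_{\min}$ is the common share of the roots. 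This factor comes from the root rows, each contributing its choice share once the diagonal is integrated. The image of the face is the set of vectors in the simplex $\sum_i p_i=1$ with $p_i=p_{\min}$ for $i\in R$ and $p_i>p_{\min}$ for $i\notin R$. The map is injective because $X\mapsto p(X)$ is strictly monotone.

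\emph{Step 3: identification.} Parametrize the face by $t=p_{\min}$ and $e_i=p_i-t>0$ for $i\notin R$, subject to $nt+\sum_i e_i=1$. The conditional density is then proportional to $t^K$ on this simplex. Writing $u=nt$, the vector $(u,(e_i)_{i\notin R})$ has density proportional to $u^K$ on the standard $(n-K)$-simplex, which is Dirichlet$(K+1,1,\dots,1)$. The standard Gamma representation gives $u=Z_0/S$ and $e_i=E_i/S$, which is exactly the stated formula.

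\emph{Main obstacle.} The hard part is Step 2: checking that the product of row factors, summed over forests, is exactly $p_{\min}^K$ times the Jacobian determinant, independent of $f$. In particular, the contribution of the matching-optimality constraint must be absorbed correctly. I would first verify the identity directly for $n=2$, and then handle general $n$ via the all-minors matrix--tree theorem applied to the matrix $(-\partial p_i/\partial x_j)$, whose row sums vanish because $\sum_i p_i\equiv 1$.
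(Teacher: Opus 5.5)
Your proposal follows essentially the same route as the paper. It decomposes the root-face density over forests and sums with the all-minors matrix--tree theorem, identifies $\det L(W)_{S,S}$ with the reduced Jacobian of the choice map so that only $n!\,p_{\min}^K$ survives, and then recognizes Dirichlet$(K+1,1,\dots,1)$ through normalized Gamma variables.

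A few small points. The tight entry has density $f(v_i(i)+x_i-x_j)$, not $f(v_i(i)+x_j-x_i)$. The matching-optimality constraint you flag as an obstacle needs no separate treatment, because feasible payments already certify optimality: summing the slack inequalities over any permutation cancels the payments. Finally, two claims are still asserted rather than proved: that the face maps \emph{onto} the whole simplex region, and that the roots carry the strict minimum share. The paper supplies these via strict convexity and coercivity of $x\mapsto\mathbb E[\max_i(V_i+x_i)]-z\cdot x$ on the hyperplane $\sum_i x_i=0$, together with exchangeability.
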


Using the representation in Lemma~\ref{lem:square-choice-share-representation} and $K=\Theta_p(\sqrt n)$, Lemma~\ref{lem:square-root-scale} in Appendix~\ref{app:square-logarithmic-proofs} establishes $p_{\min}=\Theta_p(n^{-3/2})$ and $\max_{i\in N}p_i=O_p(\log n/n)$. In particular,
\begin{equation}
    T_n:=-\log(np_{\min})=\frac12\log n+O_p(1).     \label{eq:square-choice-share-scale}
\end{equation}
This statistic measures the logarithmic gap between the minimum choice share and the average share $1/n$. Lemma~\ref{lem:square-share-budget-comparison} relates this gap to $B_{\mathrm{sing}}$ using the density bounds $\alpha\le f\le\beta$.

\begin{lemma}\label{lem:square-share-budget-comparison}
    For $T_n=-\log(np_{\min})$ defined above,
    \[
        \frac{T_n}{\beta}-O_p(1)
        \le B_{\mathrm{sing}}
        \le \frac{T_n}{\alpha}+O(1).
    \]
\end{lemma}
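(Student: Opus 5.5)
The plan is to compare the choice share of a root directly with the zero-payment benchmark $\int_0^1 f(t)F(t)^{n-1}\,dt=1/n$, where $F$ is the c.d.f.\ of $\mathcal D$. Fix a root $r\in R$, so $x_r=0$ and $p_r=p_{\min}$. Conditioning on $V_r=t$ gives the exact formula
\[
    p_{\min}=\int_0^1 f(t)\prod_{j\ne r}F(t-x_j)\,dt,
\]
with the convention $F(s)=0$ for $s<0$. Both inequalities of Lemma~\ref{lem:square-share-budget-comparison} will come from pointwise comparisons of the factor $F(t-x_j)$ with $F(t)$, using the density bounds $\alpha\le f\le\beta$.

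\emph{Upper bound on $B_{\mathrm{sing}}$.} I claim that $F(t-x)\le F(t)e^{-\alpha x}$ for all $t\in[0,1]$ and $x\ge0$.
\begin{itemize}
\item If $x\ge t$, the left side is $0$.
\item Otherwise $F(t-x)\le F(t)-\alpha x\le F(t)(1-\alpha x)\le F(t)e^{-\alpha x}$, using $F(t)\le 1$.
\end{itemize}
Inserting this into the integral gives
\[
    p_{\min}\le e^{-\alpha\sum_j x_j}\int_0^1 fF^{n-1}\,dt=\frac{e^{-\alpha B_{\mathrm{sing}}}}{n},
\]
by \eqref{eq:square-singleton-budget}. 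Taking logarithms yields $B_{\mathrm{sing}}\le T_n/\alpha$ deterministically, so this half even holds with no $O(1)$ term.

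\emph{Lower bound on $B_{\mathrm{sing}}$.} I would restrict the integral to $t\in[1-\varepsilon_n,1]$ with $\varepsilon_n:=1/\log n$. For such $t$ we have $F(t)\ge 1-\beta\varepsilon_n$, and for $x_j$ small, $F(t-x_j)\ge F(t)-\beta x_j$. Applying $\log(1-u)\ge -u-u^2$ for $u\le1/2$ to $u=\beta x_j/F(t)$, I expect
\[
    \prod_{j\ne r}F(t-x_j)\ge F(t)^{n-1}\exp\Bigl(-\beta(1+O(\varepsilon_n))B_{\mathrm{sing}}-O\bigl(\textstyle\sum_j x_j^2\bigr)\Bigr).
\]
The restricted benchmark integral is $\int_{1-\varepsilon_n}^1 fF^{n-1}\,dt=\frac1n\bigl(1-F(1-\varepsilon_n)^n\bigr)=\frac{1-o(1)}{n}$, because $n\varepsilon_n\to\infty$. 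Since $B_{\mathrm{sing}}\le T_n/\alpha=O_p(\log n)$ by the first half and \eqref{eq:square-choice-share-scale}, the factor $1+O(\varepsilon_n)$ costs only $O_p(1)$ in the exponent. This would give $T_n\le \beta B_{\mathrm{sing}}+O_p(1)+O\bigl(\sum_j x_j^2\bigr)$.

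\emph{Main obstacle.} The difficulty is the large payments: the quadratic error $\sum_j x_j^2$, and agents with $x_j\ge t$, for whom $F(t-x_j)$ vanishes. I would first try to show $\max_i x_i=O_p(1)$ with a tail-summable profile, for example that $\sum_j x_j^2=O_p(1)$. The intended source is the bound $\max_i p_i=O_p(\log n/n)$ from Lemma~\ref{lem:square-root-scale}. A large $x_i$ would force a large share $p_i\gtrsim\Pr\{V_i+x_i\ge\max_{j\ne i}(V_j+x_j)\}$, because most other coordinates carry small payments. Failing that, I would use a cruder decomposition. At most $B_{\mathrm{sing}}/\delta=O_p(\log n)$ agents have $x_j>\delta$. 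I would then show via the forest structure behind Lemma~\ref{lem:square-choice-share-representation} that their total contribution to $\log(np_{\min})$ is $-\beta\sum_{x_j>\delta}x_j-O_p(1)$. This is the step where I am least sure the stated $O_p(1)$ error survives.
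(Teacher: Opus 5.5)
Your upper bound is correct, and slightly sharper than the paper's. The pointwise inequality $F(t-x)\le F(t)e^{-\alpha x}$, combined with the exact benchmark $\int_0^1 fF^{n-1}\,dt=1/n$, gives $B_{\mathrm{sing}}\le T_n/\alpha$ deterministically. The paper instead bounds $f\le\beta$ and $F(1-y)\le e^{-\alpha y}$ separately and picks up an additive constant. Your restricted-integral lower bound is also sound, but only \emph{conditionally}. If $\max_i x_i<(1-\beta\varepsilon_n)/(2\beta)$ and $\sum_j x_j^2=O_p(1)$, then no factor $F(t-x_j)$ vanishes on $[1-\varepsilon_n,1]$, the inequality $\log(1-u)\ge-u-u^2$ applies, and you get $T_n\le\beta B_{\mathrm{sing}}+O_p(1)$.

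The gap is exactly the step you flag: you never show that the payments are small, and the lower bound fails without it. Your heuristic that a large $x_i$ forces a large $p_i$ ``because most other coordinates carry small payments'' does not work one agent at a time. Several agents with comparably large payments compete with each other, and none of them individually need have a large share. The missing idea, which is how the paper argues, is to aggregate them. Let $Q=\max_i x_i$ be attained at $j^\star$, and let $H=\{i:x_i>Q/2\}$, so that $|H|\le 2B_{\mathrm{sing}}/Q$. If $V_{j^\star}>1-\min\{Q/2,1\}$, then $V_{j^\star}+Q>1+Q/2\ge\max_{i\notin H}(V_i+x_i)$, so the winner lies in $H$. Hence
\[
    \alpha\min\{Q/2,1\}\le\sum_{i\in H}p_i\le |H|\,p_{\max}\le\frac{2B_{\mathrm{sing}}\,p_{\max}}{Q}.
\]
Your own upper bound together with \eqref{eq:square-choice-share-scale} gives $B_{\mathrm{sing}}=O_p(\log n)$, and Lemma~\ref{lem:square-root-scale} gives $p_{\max}=O_p(\log n/n)$, so $B_{\mathrm{sing}}p_{\max}=o_p(1)$. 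The display then forces $\Pr(Q>2)\to0$, and on $\{Q\le2\}$ it gives $Q^2\le 4B_{\mathrm{sing}}p_{\max}/\alpha$. Consequently $Q=o_p(1)$ and
\[
    \sum_j x_j^2\le Q\,B_{\mathrm{sing}}\le \frac{2}{\sqrt\alpha}\,B_{\mathrm{sing}}^{3/2}p_{\max}^{1/2}=o_p(1).
\]
Inserting these two facts into your restricted-integral argument closes the lower bound. Your fallback through the forest structure is not needed.
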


We can now combine the choice-share estimates with the singleton reduction to obtain logarithmic bounds for the unrestricted optimum $\befmon$.

\begin{theorem}\label{thm:square-logarithmic-scale}
    Suppose that $m_n=n$. For every fixed $\epsilon>0$,
    \[
        \Pr\!\left\{
            \left(\frac{1}{2\beta}-\epsilon\right)\log n
            \le \befmon
            \le
            \left(\frac{1}{2\alpha}+\epsilon\right)\log n
        \right\}\longrightarrow 1.
    \]
    In particular, $\befmon=\Theta_p(\log n)$.
\end{theorem}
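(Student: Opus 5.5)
The proof chains three earlier results. Lemma~\ref{lem:square-singleton-reduction} identifies $\befmon$ with $B_{\mathrm{sing}}$ w.h.p. Lemma~\ref{lem:square-share-budget-comparison} sandwiches $B_{\mathrm{sing}}$ between $T_n/\beta$ and $T_n/\alpha$, up to $O_p(1)$ terms. The estimate~\eqref{eq:square-choice-share-scale} gives $T_n=\tfrac12\log n+O_p(1)$. Nothing new is needed; the work is only in turning the $O_p$ statements into a single high-probability event with the stated constants.

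Fix $\epsilon>0$ and $\delta>0$. By Lemma~\ref{lem:square-singleton-reduction}, the event $E_1:=\{\befmon=B_{\mathrm{sing}}\}$ has probability at least $1-\delta$ for large $n$. From~\eqref{eq:square-choice-share-scale}, pick $C_1$ such that
\[
E_2:=\{|T_n-\tfrac12\log n|\le C_1\}
\]
has probability at least $1-\delta$ for all large $n$. From the lower inequality in Lemma~\ref{lem:square-share-budget-comparison}, the $O_p(1)$ term is bounded by some $C_2$ on an event $E_3$ of probability at least $1-\delta$. The upper inequality there holds deterministically with some constant $C_3$.

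On $E_1\cap E_2\cap E_3$ we get
\[
\befmon\ge \frac{\tfrac12\log n-C_1}{\beta}-C_2=\frac{\log n}{2\beta}-O(1)
\]
and
\[
\befmon\le \frac{\tfrac12\log n+C_1}{\alpha}+C_3=\frac{\log n}{2\alpha}+O(1).
\]
The $O(1)$ constants depend only on $\delta$, $\alpha$ and $\beta$. For $n$ large they are at most $\epsilon\log n$, so the displayed event holds with probability at least $1-3\delta$ for all large $n$. Since $\delta$ is arbitrary, the probability tends to one.

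For $\befmon=\Theta_p(\log n)$, take $\epsilon=\min\{1/(4\beta),1\}$ in the statement just proved. Then with probability tending to one,
\[
\frac{\log n}{4\beta}\le\befmon\le\Bigl(\frac1{2\alpha}+1\Bigr)\log n,
\]
which gives the $\Theta_p$ bounds with constants not even depending on the confidence level.

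The main obstacle is not in this assembly, which is routine, but upstream. It lies in the estimate~\eqref{eq:square-choice-share-scale}, namely $p_{\min}=\Theta_p(n^{-3/2})$, which rests on $K=\Theta_p(\sqrt n)$ and on the Gamma--exponential representation. Given those as established, the one point needing care here is that $O_p(1)$ errors are tight but not uniformly bounded. That is handled by working at a fixed confidence level $\delta$ and absorbing the resulting constants into $\epsilon\log n$.
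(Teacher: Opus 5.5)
Your proposal is correct and is essentially the paper's proof: combine $T_n=\tfrac12\log n+O_p(1)$ (Lemma~\ref{lem:square-root-scale}) with Lemma~\ref{lem:square-share-budget-comparison}, absorb the tight error terms into $\epsilon\log n$ with high probability, and transfer the bounds from $B_{\mathrm{sing}}$ to $\befmon$ via Lemma~\ref{lem:square-singleton-reduction}. Your fixed-$\delta$ bookkeeping and the choice $\epsilon=\min\{1/(4\beta),1\}$ for the $\Theta_p$ claim spell out what the paper states in one sentence each.
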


\paragraph{Proof sketch.}
Substitute $T_n=\tfrac12\log n+O_p(1)$ from \eqref{eq:square-choice-share-scale} into Lemma~\ref{lem:square-share-budget-comparison}. The resulting tight error terms are smaller than $\epsilon\log n$ w.h.p., and Lemma~\ref{lem:square-singleton-reduction} transfers the bounds from $B_{\mathrm{sing}}$ to $\befmon$. This proves the $q=1$ case of Theorem~\ref{thm:monetary-exact-divisibility} and completes the classification.

\begin{remark}
For the uniform distribution on $[0,1]$, Lemmas~\ref{lem:square-singleton-reduction}, \ref{lem:square-share-budget-comparison}, and~\ref{lem:square-root-scale} further yield $\befmon=\tfrac12\log n+O_p(1)$.
\end{remark}

\section{Conclusion}

We studied the minimum total monetary subsidy needed for exact envy-freeness under our i.i.d. additive valuation model. For $m_n=qn+r_n$ with fixed $q$, the optimum is $n-r_n+o_p(n)$ in the nonzero-remainder cases analyzed. The square case $m_n=n$ has logarithmic compensation. Together with the known zero-subsidy guarantee for $m_n=qn$ and $q\ge2$, these results show how the number of goods and its divisibility by the number of agents shape typical compensation requirements.

A natural future direction is to replace money by heterogeneous divisible goods. For comparison with a monetary budget $B$, assume that each agent values the entire divisible endowment at most $B$, while valuations of its parts may differ. This setting has been studied in the worst case by Li et al.~\cite{li2024ratio}, who give EF guarantees up to a fraction of one good depending on each agent's indivisibility ratio, namely the fraction of her total value attributable to indivisible goods. Under random valuations of the indivisible goods, it remains to determine how preference diversity affects the minimum $B$ needed for exact envy-freeness with high probability and when this quantity is asymptotically smaller than the required monetary budget.

\begin{credits}
\subsubsection{\ackname} GPT-5.6 Sol was used to assist with mathematical analysis, language editing, and manuscript formatting.
The author has verified all mathematical details and takes full responsibility for the content of this paper.

% \subsubsection{\discintname}
% It is now necessary to declare any competing interests or to specifically
% state that the authors have no competing interests. Please place the
% statement with a bold run-in heading in small font size beneath the
% (optional) acknowledgments\footnote{If EquinOCS, our proceedings submission
% system, is used, then the disclaimer can be provided directly in the system.},
% for example: The authors have no competing interests to declare that are
% relevant to the content of this article. Or: Author A has received research
% grants from Company W. Author B has received a speaker honorarium from
% Company X and owns stock in Company Y. Author C is a member of committee Z.
\end{credits}
%
% ---- Bibliography ----
%
% BibTeX users should specify bibliography style 'splncs04'.
% References will then be sorted and formatted in the correct style.
%
\bibliographystyle{splncs04}
\bibliography{asymptotic}
%
% \begin{thebibliography}{8}
% \bibitem{ref_article1}
% Author, F.: Article title. Journal \textbf{2}(5), 99--110 (2016)

% \bibitem{ref_lncs1}
% Author, F., Author, S.: Title of a proceedings paper. In: Editor,
% F., Editor, S. (eds.) CONFERENCE 2016, LNCS, vol. 9999, pp. 1--13.
% Springer, Heidelberg (2016). \doi{10.10007/1234567890}

% \bibitem{ref_book1}
% Author, F., Author, S., Author, T.: Book title. 2nd edn. Publisher,
% Location (1999)

% \bibitem{ref_proc1}
% Author, A.-B.: Contribution title. In: 9th International Proceedings
% on Proceedings, pp. 1--2. Publisher, Location (2010)

% \bibitem{ref_url1}
% LNCS Homepage, \url{http://www.springer.com/lncs}, last accessed 2023/10/25
% \end{thebibliography}

\clearpage
\appendix

% These environments restate a numbered result without advancing any theorem
% counter. The number is inherited from the result's label in the main text.
\makeatletter
\newenvironment{restatedtheorem}[2][]{%
  \def\restated@note{#1}%
  \begin{trivlist}
  \item[\hskip\labelsep{\bfseries Theorem~\ref{#2}%
    \ifx\restated@note\@empty.\else\ (\restated@note).\fi}]%
  \itshape
}{\end{trivlist}}
\newenvironment{restatedlemma}[2][]{%
  \def\restated@note{#1}%
  \begin{trivlist}
  \item[\hskip\labelsep{\bfseries Lemma~\ref{#2}%
    \ifx\restated@note\@empty.\else\ (\restated@note).\fi}]%
  \itshape
}{\end{trivlist}}
\makeatother

\section{Missing Proofs of Section~\ref{sec:monetary-nonzero-remainders}}
\label{app:sec3-proofs}

This appendix restates the results from Section~\ref{sec:monetary-nonzero-remainders} whose proofs were deferred and gives their complete proofs. We first recall the main theorem; its two cases are proved after the corresponding auxiliary lemmas.

\begin{restatedtheorem}{thm:monetary-nonzero-remainders}
Fix an integer $q\ge0$ and suppose that $m_n=qn+r_n$, where $1\le r_n<n$. If $q=0$, assume additionally that $r_n/n\to\lambda<1$. Then
\[
    \befmon=
    \begin{cases}
        n-r_n+O_p(1), & q=0,\\
        n-r_n+o_p(n), & q\ge1.
    \end{cases}
\]
Moreover, Algorithm~\ref{alg:balanced-compensation} returns an EF outcome for every valuation profile in $O(n^3)$ time, and under the random model its total payment obeys the corresponding estimate above.
\end{restatedtheorem}

\begin{restatedlemma}
    {lem:balanced-compensation}
For every valuation profile, Algorithm~\ref{alg:balanced-compensation} returns an EF outcome, and its total payment satisfies
\begin{align*}
    \sum_{i\in N}x_i
    =n\max_{j\in N}\psi_j-\sum_{j\in N}\psi_j.
\end{align*}
The algorithm runs in $O(nm_n+n^3)$ arithmetic operations, and hence in $O(n^3)$ operations when $q$ is fixed.
\end{restatedlemma}

\begin{proof}
The bipartite matching polytope is integral. Since the graph is balanced and complete and all weights are nonnegative, an optimal integral matching can be extended, without decreasing its weight, to a perfect matching that remains optimal. Thus the matching step is well defined. Complementary slackness gives $\phi_i+\psi_j=v_i(P_j)$ whenever $i$ is matched to $P_j$. Consequently, for every $i,h\in N$,
\[
    v_i(A_h)\le\phi_i+\pi_h,
    \qquad
    v_i(A_i)=\phi_i+\pi_i,
\]
and hence $v_i(A_h)-v_i(A_i)\le\pi_h-\pi_i=x_i-x_h$. This is precisely the EF inequality. The payments are nonnegative by the definition of $H$. Because every bundle is matched and $\psi_j\ge0$, we have $H=\max_{i\in N}\pi_i=\max_{j\in N}\psi_j$. Summing the payments now gives~\eqref{eq:balanced-payment}.

The bundle values can be computed in $O(nm_n)$ operations. A primal--dual assignment algorithm computes the matching and a complementary optimal dual solution in $O(n^3)$ operations. Since fixed $q$ implies $m_n=O(n)$, the final claim follows.
\end{proof}

\subsection{The sparse case: $q=0$}

\begin{restatedlemma}
    {lem:sparse-uniform-deficiency}
Suppose that $0<m_n<n$ and $m_n/n\to\lambda<1$, and define
\[
    \Delta_n:=
    \max_{\substack{U\subseteq N\\|U|=n-m_n}}
    \sum_{g\in M}\left(1-\max_{i\in U}v_i(g)\right).
\]
Then $\Delta_n=O_p(1)$.
\end{restatedlemma}

\begin{proof}
For any fixed $(n-m_n)$-subset $U\subseteq N$ and $0\le t\le1$, PDF-boundedness gives
\[
    \Pr\left(1-\max_{i\in U}v_i(g)>t\right)
    \le (1-\alpha t)^{n-m_n}
    \le e^{-\alpha(n-m_n)t}.
\]
It follows that
\[
    \mathbb E\left[
        \exp\left(\frac{\alpha(n-m_n)}{2}
        \left(1-\max_{i\in U}v_i(g)\right)\right)
    \right]\le2.
\]
For a fixed $U$, these deficiencies are independent across goods. Hence an exponential Markov bound followed by a union bound over all possible $U$ gives
\begin{align*}
    \Pr(\Delta_n>z)
    \le \binom{n}{n-m_n}2^{m_n}
        e^{-\alpha(n-m_n)z/2}.
\end{align*}
Since $m_n/n\to\lambda<1$ and $(n-m_n)/n\to1-\lambda>0$, the right-hand side tends to zero for every sufficiently large fixed $z$. Therefore $\Delta_n=O_p(1)$.
\end{proof}

\begin{restatedlemma}
    {lem:sparse-critical-row}
Suppose that $0<m_n<n$ and $m_n/n\to\lambda<1$. For every agent $i$, define $\eta_i:=1-\max_{g\in M}v_i(g)$, and let $\eta_{(m_n+1)}$ be the $(m_n+1)$-st smallest of these values. Then
\[
    (n-m_n)\eta_{(m_n+1)}=O_p(1).
\]
\end{restatedlemma}

\begin{proof}
For fixed $a>0$, the count
\[
    Z_a:=\left|\left\{i\in N:
        \eta_i\le \frac{a}{n-m_n}\right\}\right|
\]
is binomial because the rows of valuations are independent. For all sufficiently large $n$, the density lower bound implies
\begin{align*}
    \Pr\left(\eta_i\le\frac{a}{n-m_n}\right)
    &\ge 1-\left(1-\frac{\alpha a}{n-m_n}\right)^{m_n}\\
    &\ge 1-\exp\left(-\frac{\alpha a m_n}{n-m_n}\right).
\end{align*}
The event $(n-m_n)\eta_{(m_n+1)}>a$ is precisely $Z_a\le m_n$. If $\lambda>0$, choose $a$ such that
\[
    1-\exp\left(-\frac{\alpha a\lambda}{1-\lambda}\right)>\lambda.
\]
The expectation of $Z_a$ is then separated from $m_n$ by a linear margin, so a Chernoff bound gives $\Pr(Z_a\le m_n)\to0$.

If $\lambda=0$, then $m_n=o(n-m_n)$ and $n-m_n\sim n$. For fixed $a$ and all sufficiently large $n$,
\[
    1-\exp\left(-\frac{\alpha a m_n}{n-m_n}\right)
    \ge \frac{\alpha a m_n}{2(n-m_n)},
\]
and hence $\mathbb E[Z_a]\ge\alpha a m_n/2$. If $a\ge4/\alpha$, then $m_n\le\mathbb E[Z_a]/2$, and a Chernoff bound gives
\[
    \Pr(Z_a\le m_n)
    \le \exp\left(-\frac{\mathbb E[Z_a]}{8}\right)
    \le e^{-\alpha a/16},
\]
where the last inequality uses $m_n\ge1$. This probability can be made arbitrarily small by choosing $a$ large, proving the claim.
\end{proof}

\begin{proof}[$q=0$ case of
Theorem~\ref{thm:monetary-nonzero-remainders}]
\emph{Upper bound.} Run Algorithm~\ref{alg:balanced-compensation} and denote the agents matched to empty bundles by $U$. Then $|U|=n-m_n$. If $i\in U$ is matched to the empty bundle $P_j$, complementary slackness gives $\phi_i+\psi_j=v_i(P_j)=0$. Since both potentials are nonnegative, $\phi_i=\psi_j=0$. Write $\psi_g$ for the potential of the singleton bundle containing $g$. Every such potential is at most one: if $\psi_g>1$, reducing it to one preserves dual feasibility because $v_i(g)\le1$ and $\phi_i\ge0$, contradicting optimality. The empty-bundle potentials are zero and every singleton bundle is matched, so \eqref{eq:balanced-payment} becomes
\begin{align}
    \sum_{i\in N}x_i
    =n\max_{g\in M}\psi_g-\sum_{g\in M}\psi_g
    \le n-\sum_{g\in M}\psi_g.                             \label{eq:sparse-budget}
\end{align}

For $i\in U$, dual feasibility and $\phi_i=0$ give $\psi_g\ge v_i(g)$. Hence \eqref{eq:sparse-budget} and Lemma~\ref{lem:sparse-uniform-deficiency} yield
\begin{align*}
    \sum_{i\in N}x_i
    &\le n-m_n+\sum_{g\in M}\left(1-\max_{i\in U}v_i(g)\right)\\
    &\le n-m_n+\Delta_n
     =n-m_n+O_p(1).
\end{align*}
Therefore $\befmon\le n-m_n+O_p(1)$.

\emph{Lower bound.} Use the notation of Lemma~\ref{lem:sparse-critical-row}. Any allocation has at least $n-m_n$ agents with empty bundles. All such agents must receive the same payment: two agents with identical empty bundles would otherwise make the lower-paid agent envious. Moreover, at least one of the $m_n+1$ agents with the smallest $\eta_i$ must receive an empty bundle. That agent values some allocated good at least $1-\eta_{(m_n+1)}$; since all payments are nonnegative, EF requires the common empty-bundle payment to be at least this value. Consequently, Lemma~\ref{lem:sparse-critical-row} gives
\begin{align*}
    \befmon
    &\ge(n-m_n)\bigl(1-\eta_{(m_n+1)}\bigr)\\
    &=n-m_n-O_p(1).
\end{align*}
Combining the two bounds proves the $q=0$ assertion. Since the lower bound applies to every EF outcome, it also applies to the output of Algorithm~\ref{alg:balanced-compensation}, establishing the stated payment guarantee for that algorithm.
\end{proof}

\subsection{The case $q\ge1$}

We assume $q\ge1$ throughout this subsection. The next three lemmas also apply when $r_n=0$, as needed in Section~\ref{sec:monetary-exact-divisibility}.

\begin{restatedlemma}
    {lem:positive-dual-oscillation}
Fix a deterministic partition $P_1,\ldots,P_n$ in which every bundle has size $q$ or $q+1$. For any maximum-weight perfect matching between agents and these bundles, write $i_j$ for the agent matched to $P_j$, and let $(\phi,\psi)$ be any complementary optimal solution to the corresponding assignment dual~\eqref{eq:balanced-dual}. Define $\theta_j:=\psi_j-|P_j|$. There exists a constant $C>0$, depending only on $q$ and $\alpha$, such that, w.h.p.,
\[
    \max_j\theta_j-\min_j\theta_j
    \le C n^{-1/(q+1)}(\log n)^{1+1/(q+1)}=o(1).
\]
This event holds simultaneously for every choice of the maximum-weight matching and complementary optimal dual solution.
\end{restatedlemma}

\begin{proof}
Choose a sufficiently large constant $\gamma>0$, depending only on $q$ and $\alpha$, and set
\[
    t_n=\gamma\left(\frac{\log n}{n}\right)^{1/(q+1)}.
\]
Call an agent--bundle pair $(i,j)$ \emph{high} if
\[
    v_i(g)\ge 1-\frac{t_n}{q+1}
    \qquad\text{for every $g\in P_j$}.
\]
For all sufficiently large $n$, PDF-boundedness gives
\[
    \Pr\bigl((i,j)\text{ is high}\bigr)
    \ge
    \left(\frac{\alpha t_n}{q+1}\right)^{q+1}
    =c_\gamma\frac{\log n}{n},
\]
where $c_\gamma>0$ can be made arbitrarily large by increasing $\gamma$. Since the bundles are disjoint and all singleton valuations are independent, the high-edge indicators are mutually independent.

If $\gamma$ is sufficiently large, then w.h.p. every set of at most $n/4$ vertices on either side of the high-edge bipartite graph has at least twice as many neighbors, and every agent set and bundle set of size at least $n/4$ have an edge between them. Indeed, the probability that either property fails is at most
\[
    2\sum_{s=1}^{\lfloor n/4\rfloor}
        \binom ns\binom n{2s}
        \exp\left(-c_\gamma\frac{s(n-2s)\log n}{n}\right)
    +2^{2n}\exp\left(-\frac{c_\gamma n\log n}{16}\right)
    =o(1).
\]
These two properties hold simultaneously for every bijection between agents and bundles. Contract the pairs $(i_j,P_j)$ of any such bijection and place an arc $j\to h$ whenever $(i_j,h)$ is high. Forward and backward neighborhoods double until they contain at least $n/4$ vertices, after which the second property connects them. The resulting directed graph therefore has diameter $O(\log n)$.

Dual feasibility and complementary slackness give, along every arc $j\to h$,
\[
    v_{i_j}(P_h)-v_{i_j}(P_j)\le \psi_h-\psi_j.
\]
Since $v_{i_j}(P_h)\ge |P_h|-t_n$ and $v_{i_j}(P_j)\le |P_j|$, it follows that
\[
    \theta_h\ge \theta_j-t_n.
\]
Propagating this inequality along a directed path from a maximizer of $\theta$ to a minimizer yields
\[
    \max_j\theta_j-\min_j\theta_j
    =O(t_n\log n)
    =O\left(n^{-1/(q+1)}(\log n)^{1+1/(q+1)}\right).
\]
Because the expansion event holds simultaneously for every bijection, this bound applies to every maximum-weight matching and every complementary optimal dual solution.
\end{proof}

\begin{restatedlemma}
    {lem:positive-large-coverage}
Let $h_n:=\lceil n^{2/3}\rceil$. There exists a constant $\gamma_1>0$, depending only on $q$ and $\alpha$, such that, for $\eta_n:=\gamma_1n^{-1/(3(q+1))}$, the following holds w.h.p. For every family $\mathcal T$ of at least $\lceil h_n/(q+1)\rceil$ pairwise disjoint $(q+1)$-subsets of goods, all but at most $h_n$ agents value every good in at least one member of $\mathcal T$ at least $1-\eta_n$.
\end{restatedlemma}

\begin{proof}
It suffices to consider subfamilies of exactly $L_n:=\lceil h_n/(q+1)\rceil$ tuples. For a fixed such family, a given agent fails to value every good of any one tuple at least $1-\eta_n$ with probability at most
\[
    \left(1-(\alpha\eta_n)^{q+1}\right)^{L_n}
    \le \exp\left(-\Theta(n^{1/3})\right).
\]
By independence across agents, the probability that $h_n$ prescribed agents all fail is $\exp(-\Theta(n))$. The number of choices of these agents and of the retained tuple family is at most
\[
    \binom n{h_n}m_n^{(q+1)L_n}
    =\exp\bigl(O(n^{2/3}\log n)\bigr)
    =\exp(o(n)).
\]
A union bound proves the claim simultaneously for every such family, including families selected after the valuations are observed.
\end{proof}

\begin{restatedlemma}
    {lem:positive-small-propagation}
Set $h_n:=\lceil n^{2/3}\rceil$. There exist positive constants $\gamma_2$, $\gamma_3$, and $\kappa_0$, depending only on $q$ and $\alpha$. With
\[
    \eta_n':=\gamma_2n^{-1/(3(q+1))},\qquad
    \zeta_n:=\gamma_3n^{-1/(6q)},\qquad
    \ell_n:=\lfloor\kappa_0n^{5/6}\rfloor,
\]
the following properties hold simultaneously w.h.p.:
\begin{enumerate}
    \item every $(q+1)$-subset of goods has at least $4h_n$ agents who value each of its goods at least $1-\eta_n'$;
    \item every $q$-subset of goods has at least $\ell_n$ agents who value each of its goods at least $1-\zeta_n/q$;
    \item for every family of at least $\lceil\ell_n/2\rceil$ pairwise disjoint $q$-subsets, all but at most $\lceil n^{1/3}\rceil$ agents value every good in at least one member of the family at least $1-\zeta_n/q$.
\end{enumerate}
\end{restatedlemma}

\begin{proof}
For a fixed $(q+1)$-subset, the number of agents who value each of its goods at least $1-\eta_n'$ stochastically dominates a binomial random variable with success probability $(\alpha\eta_n')^{q+1}=\Theta(n^{-1/3})$. Choosing $\gamma_2$ sufficiently large, a Chernoff bound makes the probability that this number is below $4h_n$ exponentially small in $n^{2/3}$. A union bound over the $O(n^{q+1})$ possible subsets proves the first property.

For a fixed $q$-subset, the corresponding success probability is at least
\[
    \left(\frac{\alpha\zeta_n}{q}\right)^q
    =\Theta(n^{-1/6}).
\]
A Chernoff bound and a union bound over the $O(n^q)$ possible subsets prove the second property after choosing $\kappa_0$ sufficiently small.

For the third property, retain exactly $\lceil\ell_n/2\rceil$ members of any larger family. For a fixed retained family, a given agent fails to value every good in any one member at least $1-\zeta_n/q$ with probability at most
\[
    \exp\left(-\Theta(n^{-1/6}\ell_n)\right)
    =\exp\left(-\Theta(n^{2/3})\right).
\]
Thus $\lceil n^{1/3}\rceil$ prescribed agents all fail with probability $\exp(-\Theta(n))$. The number of choices of these agents and of the retained family is at most
\[
    \binom n{\lceil n^{1/3}\rceil}
        m_n^{q\lceil\ell_n/2\rceil}
    =\exp\bigl(O(n^{5/6}\log n)\bigr)
    =\exp(o(n)).
\]
Another union bound proves the third property uniformly over all such families.
\end{proof}

\begin{proof}[$q\ge1$ case of
Theorem~\ref{thm:monetary-nonzero-remainders}]
\emph{Upper bound.} Run Algorithm~\ref{alg:balanced-compensation} and define $\theta_j:=\psi_j-|P_j|$. By Lemma~\ref{lem:positive-dual-oscillation}, $\max_j\theta_j-\min_j\theta_j=o(1)$ w.h.p. This is the step that converts the dual construction into a sharp estimate of its total payment.

Since $r_n>0$, at least one predetermined bundle has size $q+1$. On the event $\max_j\theta_j-\min_j\theta_j<1$, every maximizer $j^\star$ of $\psi_j$ is such a bundle: its potential exceeds that of every size-$q$ bundle. As $\sum_j|P_j|=m_n=qn+r_n$, \eqref{eq:balanced-payment} becomes
\[
    \sum_{i\in N}x_i
    =n-r_n+n\theta_{j^\star}-\sum_{j\in N}\theta_j.
\]
Consequently,
\begin{align}
    \left|\sum_{i\in N}x_i-(n-r_n)\right|
    \le n\left(\max_j\theta_j-\min_j\theta_j\right)=o(n),
    \label{eq:positive-upper}
\end{align}
which proves both the claimed payment estimate for the algorithm and $\befmon\le n-r_n+o_p(n)$.

\emph{Lower bound.} Work on the intersection of the high-probability events in Lemmas~\ref{lem:positive-large-coverage} and~\ref{lem:positive-small-propagation}. Both events are uniform over the families of goods to which they apply. They may therefore be used for an arbitrary EF allocation $A$ chosen after all valuations are observed. Let $X$ be its payment vector, write $s_i=|A_i|$, and define
\[
    D=\sum_{i\in N}(q-s_i)_+,
    \qquad
    E=\sum_{i\in N}(s_i-q)_+.
\]
Here $(z)_+=\max\{z,0\}$. Since $\sum_i s_i=qn+r_n$,
\begin{align}
    E=D+r_n.                                           \label{eq:positive-balance}
\end{align}
At most $E$ agents have more than $q$ goods, while at most $D\le E$ agents have fewer than $q$ goods. Thus at most $2E$ agents have bundle size different from $q$.

Set $h_n=\lceil n^{2/3}\rceil$. First suppose that $E\ge h_n$. From every bundle larger than $q$, extract as many disjoint $(q+1)$-tuples as possible. The total number extracted is at least $E/(q+1)$ because, for every integer $s>q$,
\[
    \left\lfloor\frac{s}{q+1}\right\rfloor
    \ge\frac{s-q}{q+1}.
\]
Lemma~\ref{lem:positive-large-coverage} thus ensures that all but at most $h_n$ agents have a high tuple in this extracted family. Its uniformity is essential here because both the allocation and hence the tuple family depend on the realized valuations.

For every nonexceptional agent with $s_i\le q$, EF comparison with her high tuple and nonnegativity of the other agent's payment give
\[
    x_i\ge q+1-s_i-(q+1)\eta_n.
\]
Furthermore, \eqref{eq:positive-balance} implies
\[
    \sum_{s_i\le q}(q+1-s_i)
    =|\{i\in N:s_i\le q\}|+D
    \ge n-E+D
    =n-r_n.
\]
After removing the exceptional agents, we obtain
\begin{align}
    \sum_{i\in N}x_i
    &\ge n-r_n-(q+1)h_n-(q+1)n\eta_n \nonumber\\
    &=n-r_n-o(n).                                  \label{eq:positive-large-lower}
\end{align}

It remains to consider $E<h_n$. In this case at least $n-2h_n$ agents receive exactly $q$ goods, and \eqref{eq:positive-balance} gives $r_n\le E=o(n)$. Since $r_n>0$, some allocated bundle contains a $(q+1)$-tuple. By the first part of Lemma~\ref{lem:positive-small-propagation}, at least $4h_n$ agents value every good in this tuple at least $1-\eta_n'$. Excluding the fewer than $2h_n$ agents whose bundles do not have size $q$, one such agent $i_0$ owns exactly $q$ goods. Comparing her bundle with the bundle containing the high tuple gives the seed bound
\begin{align}
    x_{i_0}\ge 1-(q+1)\eta_n'.                       \label{eq:positive-seed}
\end{align}

The last two parts of the same lemma propagate this seed bound. Applied to $A_{i_0}$, the second part supplies $\ell_n$ agents who value each of its goods at least $1-\zeta_n/q$. Since $h_n=o(\ell_n)$, at least $\lceil\ell_n/2\rceil$ of them are exact-$q$ owners for all sufficiently large $n$. For each such owner $j$, EF gives
\[
    x_j\ge x_{i_0}-\zeta_n.
\]
Their bundles form a family of disjoint $q$-subsets. The third part of the lemma and one more EF comparison therefore imply
\[
    x_i\ge x_{i_0}-2\zeta_n
\]
for all but $\lceil n^{1/3}\rceil$ of the remaining exact-$q$ owners. Together with \eqref{eq:positive-seed}, this yields
\begin{align}
    \sum_{i\in N}x_i
    &\ge (n-2h_n-\lceil n^{1/3}\rceil)
        \bigl(1-(q+1)\eta_n'-2\zeta_n\bigr) \nonumber\\
    &=n-o(n)
     =n-r_n-o(n).                                    \label{eq:positive-small-lower}
\end{align}

Because the two valuation events are uniform, these lower bounds hold simultaneously for every EF allocation and payment vector. Hence \eqref{eq:positive-large-lower} and \eqref{eq:positive-small-lower} give $\befmon\ge n-r_n-o_p(n)$. Combining this with \eqref{eq:positive-upper} proves the $q\ge1$ assertion and completes the proof of Theorem~\ref{thm:monetary-nonzero-remainders}.
\end{proof}

\section{Missing Proofs of Section~\ref{sec:monetary-exact-divisibility}}
\label{app:sec4-proofs}

\begingroup
\emergencystretch=2em

This appendix gives the complete proofs for the square case in Section~\ref{sec:monetary-exact-divisibility}. We first establish the singleton computation and reduction, then analyze the choice shares to prove the logarithmic bounds.

We assume $m_n=n$ throughout. Let $B_{\mathrm{sing}}$ denote the minimum total payment among EF allocations in which every agent receives exactly one good.

\subsection{Singleton reduction and computation}

As in the main text, after choosing a maximum-weight perfect matching $\sigma:N\to M$, we relabel the goods so that $\sigma(i)=i$ for all $i\in N$. We retain the notation $v_i(j)$ for the relabeled valuations throughout this appendix.

\begin{restatedlemma}{lem:square-singleton-computation}
    The maximum-weight perfect matching $\sigma$ is unique a.s. Let $X=(x_i)_{i\in N}$ be the coordinatewise minimal solution to
    \begin{equation*}
        x_i\ge 0,
        \qquad
        x_i-x_j
        \ge v_i(j)-v_i(i)
        \quad\text{for all }i,j\in N.
    \end{equation*}
    Then
    \begin{equation*}
        B_{\mathrm{sing}}=\sum_{i\in N}x_i.
    \end{equation*}
    Moreover, in the complete directed graph on $N$ with edge weight
    \(
        w(i,j):=v_i(j)-v_i(i),
    \)
    we have
    \begin{equation*}
        x_i
        =\max\{\text{weight of a simple directed path starting at $i$}\},
    \end{equation*}
    where the empty path has weight zero. On this probability-one event, Algorithm~\ref{alg:square-singleton-compensation} returns an EF singleton allocation with minimum total payment in $O(n^3)$ time.
\end{restatedlemma}

\begin{proof}
We first work with the original good labels. Consider any singleton allocation described by a perfect matching $\tau$ and any nonnegative payment vector $Y=(y_i)_{i\in N}$ that makes it EF. For an arbitrary perfect matching $\rho$, let $\kappa(i):=\tau^{-1}(\rho(i))$ be the agent who receives good $\rho(i)$ under $\tau$. The EF inequalities give
\[
    v_i(\tau(i))+y_i
    \ge v_i(\rho(i))+y_{\kappa(i)}
    \qquad (i\in N).
\]
Because $\kappa$ is a permutation, the payments cancel after summation. Consequently, $\tau$ maximizes welfare. Any two distinct perfect matchings have different total welfare a.s.: after conditioning on all but one entry on which they differ, equality has probability zero by absolute continuity. There are only finitely many matchings, so the welfare-maximizing matching $\sigma$ is unique a.s.

We now use the relabeling convention above. It remains to find the least payments that support $\sigma$. The EF inequality comparing agent $i$ with the bundle of agent $j$ is precisely
\[
    x_i-x_j\ge
    v_i(j)-v_i(i)=w(i,j).
\]
For every directed cycle, the sum of its edge weights is the change in welfare obtained by cyclically reassigning the corresponding goods. The optimality of $\sigma$ therefore implies that every directed cycle has nonpositive weight.

Define $x_i$ by the right-hand side of \eqref{eq:square-longest-path}. The maximum is well defined because a walk can be made simple by deleting nonpositive cycles, and the empty path shows that $x_i\ge0$. Prepend the edge $i\to j$ to a maximizing path from $j$. Deleting any resulting repeated segment again cannot decrease its weight, so $x_i\ge w(i,j)+x_j$. Thus $X$ satisfies all constraints in \eqref{eq:square-difference-constraints}. Conversely, if $Y$ is any nonnegative feasible vector, telescoping its constraints along a path from $i$ to $j$ shows that $y_i-y_j$ is at least the weight of that path. Since $y_j\ge0$, $y_i$ dominates every path weight and hence $y_i\ge x_i$. Therefore $X$ is the coordinatewise minimal feasible vector and $B_{\mathrm{sing}}=\sum_i x_i$.

Finally, the lengths used by Algorithm~\ref{alg:square-singleton-compensation} satisfy $\ell(i,j)=-w(i,j)$. They have no negative directed cycle, and hence
\[
    -\min_{j\in N}d(i,j)
    =\max\{\text{weight of a directed path starting at $i$}\}=x_i.
\]
A maximum-weight perfect matching can be computed in $O(n^3)$ time. All-pairs shortest paths can also be computed in $O(n^3)$ time, for example using the Floyd--Warshall algorithm. All remaining steps take $O(n^2)$ time, which proves the algorithmic claim.
\end{proof}

\begin{restatedlemma}{lem:square-singleton-reduction}
    Suppose that $m_n=n$. Then
    \begin{equation*}
        \Pr\{\befmon=B_{\mathrm{sing}}\}\longrightarrow 1.
    \end{equation*}
\end{restatedlemma}

\begin{proof}
We compare singleton and nonsingleton allocations on uniform high-probability events. Apply Lemmas~\ref{lem:positive-large-coverage} and~\ref{lem:positive-small-propagation} with $q=1$ and $r_n=0$, and write $h_n=\lceil n^{2/3}\rceil$. Their conclusions hold simultaneously for every relevant family of goods, including families determined after the valuations are observed.

Fix any nonsingleton allocation $A$ and any payments $X$ that make it EF. Let $e$ be the number of empty bundles, $b$ the number of bundles containing at least two goods, and $s=n-e-b$ the number of singleton bundles. Since there are exactly $n$ goods,
\[
    \sum_{|A_j|\ge2}(|A_j|-1)=e,
    \qquad b\le e,
\]
and $e\ge1$.

Suppose first that $e\ge h_n$. From the bundles of size at least two we can extract at least $e/2$ pairwise disjoint pairs of goods, because $\lfloor a/2\rfloor\ge(a-1)/2$ for every integer $a\ge2$. Lemma~\ref{lem:positive-large-coverage} implies that all but at most $h_n$ agents value both goods in at least one extracted pair at least $1-\eta_n$. For each such agent $i$ with $|A_i|\le1$, EF and the nonnegativity of the pair holder's payment give $x_i\ge2-|A_i|-2\eta_n$. Moreover,
\[
    \sum_{|A_i|\le1}(2-|A_i|)=2e+s=n+e-b\ge n.
\]
Removing at most $h_n$ exceptional agents therefore yields, uniformly over all such allocations,
\[
    \sum_{i\in N}x_i\ge n-2h_n-2n\eta_n=n-o(n).
\]

Now suppose that $e<h_n$. Fewer than $2h_n$ agents have nonsingleton bundles, so at least $n-2h_n$ agents own one good. Choose a pair in any bundle of size at least two. By the first part of Lemma~\ref{lem:positive-small-propagation}, at least $4h_n$ agents value both goods in this pair at least $1-\eta_n'$. One of them, say $i_0$, must own a singleton, and EF gives
\[
    x_{i_0}\ge1-2\eta_n'.
\]
The second part of the same lemma supplies $\ell_n$ agents who value the good owned by $i_0$ at least $1-\zeta_n$. Since $h_n=o(\ell_n)$, at least $\lceil\ell_n/2\rceil$ of them own singletons, and each such owner $j$ satisfies $x_j\ge x_{i_0}-\zeta_n$. Their goods are pairwise distinct. The third part of the lemma, followed by one further EF comparison, now gives $x_i\ge x_{i_0}-2\zeta_n$ for all but $\lceil n^{1/3}\rceil$ of the singleton owners. Consequently,
\[
    \sum_{i\in N}x_i
    \ge(n-2h_n-\lceil n^{1/3}\rceil)
       (1-2\eta_n'-2\zeta_n)
    =n-o(n).
\]
Thus, on one common high-probability event, every nonsingleton EF outcome requires total payment $n-o(n)$.

It remains to exhibit a cheaper singleton outcome. Apply Lemma~\ref{lem:positive-dual-oscillation} with $q=1$ to the deterministic partition into singleton goods. Choose a maximum-weight perfect matching and a complementary optimal dual solution $(\phi,\psi)$, and let agent $i_j$ be matched to good $j$. The payments $y_{i_j}=\max_h\psi_h-\psi_j$ are EF by the argument of Lemma~\ref{lem:balanced-compensation}. Moreover, with $\theta_j=\psi_j-1$,
\[
    \sum_{i\in N}y_i
    \le n\left(\max_j\theta_j-\min_j\theta_j\right)
    =O\!\left(\sqrt n(\log n)^{3/2}\right)=o(n)
\]
w.h.p. Hence $B_{\mathrm{sing}}=o_p(n)$. Intersecting the two high-probability events shows that, for all sufficiently large $n$, every nonsingleton EF outcome is strictly more expensive than some singleton EF outcome. Therefore $\Pr\{\befmon=B_{\mathrm{sing}}\}\to1$.
\end{proof}

\subsection{The logarithmic compensation scale}
\label{app:square-logarithmic-proofs}

We recall the notation needed for the change of variables. Let $X$ be the coordinatewise minimal nonnegative EF payment vector supporting $\sigma$. For a fresh valuation vector $V$, independent of the original instance, with i.i.d. coordinates drawn from $\mathcal D$, set
\[
    p_i(x):=\Pr_V\{V_i+x_i=\max_j(V_j+x_j)\}.
\]
Write $R:=\{i:x_i=0\}$, $K:=|R|$, $p_{\min}:=\min_i p_i(X)$, and $p_{\max}:=\max_i p_i(X)$. Minimality gives $\min_i x_i=0$, so $R$ is nonempty. Let $F$ be the CDF of $\mathcal D$, extended by $F(t)=0$ for $t\le0$ and $F(t)=1$ for $t\ge1$, and extend $f$ by zero outside $[0,1]$.

\begin{lemma}
\label{lem:square-root-face-change}
Fix a nonempty set $R\subseteq N$, put $S=N\setminus R$ and $k=|R|$, and consider the relative interior of the root face
\[
    \mathcal X_R:=\{x\in\mathbb R_{\ge0}^n:
        x_r=0\ (r\in R),\ x_i>0\ (i\in S)\}.
\]
For $x\in\mathcal X_R$ and any fixed $r\in R$, define
\begin{align*}
    a(x)&:=\int_{\mathbb R}
        f(t)\prod_{\ell\ne r}F(t-x_\ell)\,dt,\\
    W_{ij}(x)&:=\int_{\mathbb R}
        f(t-x_i)f(t-x_j)
        \prod_{\ell\notin\{i,j\}}F(t-x_\ell)\,dt
        \qquad(i\ne j).
\end{align*}
The definition of $a(x)$ does not depend on the selected root because all root coordinates are zero. Let $L(W)$ be the weighted Laplacian, with $L(W)_{ii}=\sum_{j\ne i}W_{ij}$ and $L(W)_{ij}=-W_{ij}$ for $i\ne j$. The minimal payment vector $X$ has density
\begin{equation}
    \rho_R(x)=n!\,a(x)^k\det L(W)_{S,S}                 \label{eq:root-face-density}
\end{equation}
with respect to $dx_S$ on this face. Moreover, the choice map satisfies
\begin{equation}
    p_r(x)=a(x)\quad(r\in R),
    \qquad
    \det\!\left(\frac{\partial p_S}{\partial x_S}\right)
       =\det L(W)_{S,S}.                               \label{eq:choice-jacobian}
\end{equation}
On $\mathcal X_R\cap\{\max_i x_i<1\}$, equivalently on the positive-density part of this face, it maps bijectively to the part of the open probability simplex whose minimum coordinates are exactly those in $R$. Consequently,
\begin{equation}
    \rho_R(x)\,dx_S=n!\,p_{\min}(x)^k\,dp_S.           \label{eq:root-share-change}
\end{equation}
\end{lemma}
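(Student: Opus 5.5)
The plan is to compute the law of $X$ directly from the valuation density. We decompose by the maximum-weight matching and by the tight-edge forest, then change variables row by row, and finally identify the result with the Jacobian of the choice map.

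\emph{Step 1 (reduction to a fixed matching).} By Lemma~\ref{lem:square-singleton-computation}, $\sigma$ is a.s.\ unique. Exchangeability of the i.i.d.\ valuation matrix under permutations of the goods means each of the $n!$ matchings contributes identically. This accounts for the factor $n!$, so we may condition on $\sigma=\mathrm{id}$ and compute the density of $X$ on that event.

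\emph{Step 2 (tight forest).} Fix $x\in\mathcal X_R$ with $\max_i x_i<1$. By the longest-path formula~\eqref{eq:square-longest-path}, minimality forces every nonroot $i$ to have an outgoing edge $i\to T(i)$ with
\[
x_i=x_{T(i)}+v_i(T(i))-v_i(i).
\]
Almost surely this edge is unique: two tight edges would impose a further linear equality on continuous variables. No tight cycle exists, since such a cycle would have zero weight, contradicting uniqueness of $\sigma$. Hence the tight edges form a spanning forest $T$ directed toward $R$.

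Conversely, take a forest $T$ toward $R$, payments $x$ with $x_R=0$, $x_S>0$, and a valuation matrix satisfying the tight equalities along $T$ together with $v_i(j)\le v_i(i)+x_i-x_j$ for all $i,j$. For such a matrix, $\mathrm{id}$ is optimal, because no cycle has positive weight. Moreover $x$ is exactly the minimal solution, since each $x_i$ equals the weight of the tight path to a root. So the event $\{\sigma=\mathrm{id},\,X\approx x,\,\text{forest}=T\}$ factorizes over rows.

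For a fixed row $i$, set $u_i:=v_i(i)+x_i$. For a nonroot $i$, we replace the variable $v_i(T(i))$ by $x_i$; the map is triangular along the forest, so its Jacobian is $1$. Integrating out the row then gives
\[
\int f(u-x_i)\,f(u-x_{T(i)})\prod_{\ell\ne i,T(i)}F(u-x_\ell)\,du=W_{i,T(i)}(x).
\]
For a root $r$, the row integral is $\int f(u)\prod_{\ell\ne r}F(u-x_\ell)\,du=a(x)$. Rows are independent, so the density contribution of $T$ is $a(x)^k\prod_{i\in S}W_{i,T(i)}$.

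Summing over all forests rooted at $R$, the all-minors matrix--tree theorem gives $\det L(W)_{S,S}$. This proves~\eqref{eq:root-face-density}.

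\emph{Step 3 (Jacobian of the choice map).} We have $p_i(x)=\int f(t-x_i)\prod_{\ell\ne i}F(t-x_\ell)\,dt$. The substitution $t\mapsto t+x_r$ shows $p_r=a$ for roots. For $j\ne i$, differentiating under the integral gives $\partial p_i/\partial x_j=-W_{ij}$, and $W$ is symmetric. Since $\sum_i p_i\equiv1$, the diagonal entries are $\partial p_i/\partial x_i=\sum_{j\ne i}W_{ij}$, so the full Jacobian is $L(W)$. Restricting to the free coordinates $x_S$ (roots fixed at $0$) yields~\eqref{eq:choice-jacobian}.

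\emph{Step 4 (bijection and change of variables).} For the bijection claim, note that $p=\nabla\Phi$ with $\Phi(x)=\mathbb E\max_j(V_j+x_j)$ convex. I expect this step to be the main obstacle. On the region $\max_i x_i<1$, all supports overlap, so $W_{ij}>0$. Then $L(W)_{S,S}$ is positive definite, which gives strict convexity transversal to the direction $(1,\dots,1)$ and hence injectivity.

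For the minimum coordinates: raising $x_i$ strictly raises $p_i$ relative to a root, so $p_i>p_r$ whenever $x_i>0$. Thus the minimum coordinates of $p$ are exactly those in $R$.

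For surjectivity onto the stated part of the simplex, I would argue by continuity and properness. As $x_i\uparrow1$, the share $p_i\to1$ relative to the roots; in general, the boundary of the face maps to the boundary of the target region. Invariance of domain, or an explicit conjugate-duality argument, then shows the image is all of it.

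Finally, on the positive-density part of the face we combine Steps 2 and 3:
\[
\rho_R\,dx_S=n!\,a^k\det L(W)_{S,S}\,dx_S=n!\,p_{\min}^k\,dp_S,
\]
using $p_{\min}=p_r=a$. This is~\eqref{eq:root-share-change}.
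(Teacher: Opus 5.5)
Your Steps 1--3 match the paper's proof: the factor $n!$ from exchangeability, the tight forest toward $R$, the unit-Jacobian triangular substitution, the row integrals $a(x)$ and $W_{i,T(i)}(x)$, the all-minors matrix--tree theorem, and $Dp=L(W)$. Your explicit converse check and your dominated-convergence differentiation (valid at every $x$ because $F$ is $\beta$-Lipschitz) are, if anything, tidier than the paper's Rademacher--Fubini remark.

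The real difference is Step 4. The paper argues globally on $H=\{\sum_i x_i=0\}$. For each $z$ in the open simplex, $\Phi(x)-z\cdot x$ is coercive on $H$, so it has a minimizer; this minimizer must lie in $\Omega=\{\max_i x_i-\min_i x_i<1\}$ and satisfies $p(x)=z$. Strict convexity on $\Omega$ gives uniqueness. The swap inequality $(p(x)-p(\tau x))\cdot(x-\tau x)>0$ then gives $x_i>x_j\iff p_i>p_j$, which transports the bijection to each root face.

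You work face by face instead. Injectivity comes from positive definiteness of $L(W)_{S,S}$ on the convex set $D=\{x_S:x_i>0,\ \max_i x_i<1\}$. The minimum set comes from monotonicity; precisely, $p_i=p_r$ at $x_i=0$ and $\partial_{x_i}(p_i-p_r)=\sum_{j\ne i}W_{ij}+W_{ri}>0$. Surjectivity comes from invariance of domain plus properness. This route is sound, but it trades the paper's one-shot variational existence-and-uniqueness argument for topology and a boundary analysis you only gesture at.

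To finish it, replace the heuristic that $p_i\to1$ (false in general, since other nonroots can share the mass) with the two facts actually needed:
\begin{itemize}
\item if $x_S\to x^*\in\partial D$ with some $x^*_i=0$, then $p_i(x^*)=p_r(x^*)$;
\item if $\max_i x^*_i=1$, then $a(x^*)=0$.
\end{itemize}
Together with continuity of $p$ up to $\overline D$, these show that the image is closed in the target region $\{z_S:\ z_i>c(z_S)>0\ \text{for all } i\in S\}$, where $c(z_S)=(1-\sum_{j\in S}z_j)/k$. Invariance of domain makes the image open. The target is an intersection of half-spaces, hence convex and connected, so the image is all of it.
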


\begin{proof}
We first compute the density on a root face. Outside a null set, each full-dimensional cell fixes for every nonroot $i$ a unique nontrivial tight constraint
\[
    x_i-x_{\phi(i)}=v_i(\phi(i))-v_i(i),
\]
and these tight edges $i\to\phi(i)$ form a directed forest toward $R$. Indeed, a nonroot coordinate with no tight outgoing constraint could be decreased, contradicting minimality. A directed tight cycle would have zero weight and produce a second welfare-maximizing matching, contradicting the a.s. uniqueness proved in Lemma~\ref{lem:square-singleton-computation}. On a cell with a fixed forest, write $d_i=v_i(i)$. The tight entries are $v_i(\phi(i))=d_i+x_i-x_{\phi(i)}$. The change of variables from $(d_1,\ldots,d_n,x_S)$ to the diagonal and tight-parent entries has absolute Jacobian one: after the identity block in the $d_i$ variables is removed, the remaining matrix is the reduced incidence matrix of a forest and is triangular when vertices are ordered toward the roots.

All unused entries in row $i$ satisfy $v_i(j)\le d_i+x_i-x_j$. With $t=d_i+x_i$, integrating a root row gives $a(x)$, whereas a nonroot row whose parent is $j$ contributes $W_{ij}(x)$. Thus a fixed forest contributes $a(x)^k\prod_{i\in S}W_{i\phi(i)}(x)$. Summing over all forests directed toward $R$ and applying the all-minors matrix--tree theorem gives $a(x)^k\det L(W)_{S,S}$ on a specified matching cell. The $n!$ possible optimal matchings form disjoint cells up to null boundaries and have equal density by exchangeability of the columns. This proves \eqref{eq:root-face-density}.

For an independent valuation row, conditioning on its winning adjusted value gives
\begin{equation}
    p_i(x)=\int_{\mathbb R}
        f(t-x_i)\prod_{\ell\ne i}F(t-x_\ell)\,dt.      \label{eq:choice-integral}
\end{equation}
In particular, all roots have share $a(x)$. A perturbation of size $\varepsilon$ can change the winner only if two adjusted values are within $O(\varepsilon)$ of each other. Since their differences have bounded densities, the choice map is locally Lipschitz. Rademacher's theorem and Fubini therefore justify the following derivatives at almost every $x$:
\[
    \frac{\partial p_i}{\partial x_j}=-W_{ij}\quad(i\ne j),
    \qquad
    \frac{\partial p_i}{\partial x_i}=\sum_{j\ne i}W_{ij};
\]
the second identity also follows by differentiating $\sum_i p_i=1$. Therefore $Dp=L(W)$ and the reduced Jacobian is the one in \eqref{eq:choice-jacobian}.

For completeness, we verify the range of this map. On the hyperplane $H=\{x:\sum_i x_i=0\}$, set $\Phi(x)=\mathbb E[\max_i(V_i+x_i)]$. The maximizing coordinate is unique a.s., so $\nabla\Phi=p$. On $\Omega:=\{x\in H:\max_i x_i-\min_i x_i<1\}$, every $W_{ij}$ is positive and
\[
    u^\top L(W)u=\sum_{i<j}W_{ij}(u_i-u_j)^2>0
    \qquad(0\ne u\in\mathbf1^\perp).
\]
Since $p$ is locally Lipschitz, the first derivative of the restriction of $\Phi$ to any nonconstant line segment in $\Omega$ is absolutely continuous, and its a.e. second derivative is the positive quadratic form displayed above. Hence $\Phi$ is strictly convex there. For any point $z$ in the open probability simplex, $\Phi(x)-z\cdot x$ is coercive on $H$. Indeed, if $M=\max_i x_i$, then $\Phi(x)-z\cdot x\ge\sum_i z_i(M-x_i)
\ge n(\min_i z_i)M$, which diverges with $\lVert x\rVert$ on $H$.
It therefore has a minimizer. Its first-order condition says that $p(x)-z$ is orthogonal to $H$; since both vectors sum to one, this means $p(x)=z$. Every such minimizer lies in $\Omega$, since a coordinate lying at least one below the maximum would have zero choice share. Conversely, every coordinate has positive share on $\Omega$: it can win when its value is near one and all other values are near zero. Strict convexity on $\Omega$ makes the minimizer unique. This proves that $p:\Omega\to\{z:z_i>0,\ \sum_i z_i=1\}$ is a bijection.

The bijection preserves strict coordinate order. Indeed, if $\tau$ swaps coordinates $i$ and $j$, exchangeability gives $p(\tau x)=\tau p(x)$, while strict convexity gives $(p(x)-p(\tau x))\cdot(x-\tau x)>0$ whenever $x_i\ne x_j$. Hence $x_i>x_j$ if and only if $p_i(x)>p_j(x)$. Translating to the gauge $\min_i x_i=0$ shows that the root face maps to the claimed simplex region.

If $\max_i x_i\ge1$ on the nonnegative gauge, a root has zero share and the factor $a(x)^k$ in the density vanishes. Faces with additional roots are lower-dimensional boundaries, so only the positive-density part matters. The choice map is injective there, and hence the area formula applies with multiplicity one. It cancels the two determinants and gives \eqref{eq:root-share-change}.
\end{proof}

\begin{lemma}\label{lem:square-root-count}
    For every $1\le k\le n$,
    \[
        \Pr(K=k)=\frac{k(n-1)!}{n^k(n-k)!}.
    \]
\end{lemma}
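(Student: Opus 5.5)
The plan is to integrate the face density of Lemma~\ref{lem:square-root-face-change} in choice-share coordinates, where it no longer depends on $\mathcal D$.

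First, fix a set $R\subseteq N$ with $|R|=k$, and put $S=N\setminus R$. Almost surely the minimal payment vector $X$ lies in the relative interior of exactly one root face $\mathcal X_R$. The lower-dimensional boundaries, where an additional coordinate vanishes, have probability zero because $X$ has a density on each face. The realized root set is therefore a.s. well defined, and
\[
    \Pr(R\text{ is the root set})=\int_{\mathcal X_R}\rho_R(x)\,dx_S .
\]

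Next, apply the change of variables \eqref{eq:root-share-change}. It sends the positive-density part of $\mathcal X_R$ bijectively onto the region $\Sigma_R$ of the open simplex whose minimum coordinates are exactly those in $R$. This gives
\[
    \Pr(R\text{ is the root set})=n!\int_{\Sigma_R}p_{\min}^{k}\,dp_S .
\]
On $\Sigma_R$ every root has the common share $c=p_{\min}$, and $c=(1-\sum_{i\in S}p_i)/k$ is determined by $p_S$. The region is $\{p_S:\ p_i>c\text{ for all } i\in S\}$.

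To evaluate the integral, substitute $u_i:=p_i-c$ for $i\in S$. Summing the shares gives $nc+\sum_{i\in S}u_i=1$, so $c=(1-\sum u)/n$ and $p_i=u_i+(1-\sum u)/n$. The map $u\mapsto p_S$ is affine with matrix $I_{n-k}-\frac1n\mathbf 1\mathbf 1^\top$, whose determinant is $1-(n-k)/n=k/n$. The image of $\{u>0,\ \sum u<1\}$ is exactly $\Sigma_R$: the conditions $u_i>0$ express strict minimality of the roots, and $c>0$ corresponds to $\sum u<1$. Hence
\[
    \Pr(R\text{ is the root set})=n!\,\frac kn\,n^{-k}\int_{u>0,\ \sum u<1}\Bigl(1-\sum_i u_i\Bigr)^{k}du .
\]
By the Dirichlet integral in dimension $n-k$, the last integral equals $k!/n!$. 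So each fixed $R$ of size $k$ has probability $k\cdot k!/n^{k+1}$, independent of $\mathcal D$.

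Finally, sum over the $\binom nk$ root sets of size $k$:
\[
    \Pr(K=k)=\frac{n!}{k!\,(n-k)!}\cdot\frac{k\cdot k!}{n^{k+1}}=\frac{k\,(n-1)!}{n^{k}\,(n-k)!}.
\]
As a consistency check, these probabilities should sum to one over $1\le k\le n$. This is the classical identity for the number of random functions that are ``acyclic toward a root set'', and it confirms that no mass is lost on null boundaries.

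The main obstacle is justifying that the face decomposition exhausts the probability. One needs that a.s. no positive-density mass sits where $\max_i x_i\ge1$ (there the density vanishes through the factor $a^k$) and that the boundaries between faces are null. Lemma~\ref{lem:square-root-face-change} already supplies both facts. The explicit summation to one then serves as an independent check.
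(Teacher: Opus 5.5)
Your proposal is correct and is essentially the paper's proof: integrate $n!\,p_{\min}^k\,dp_S$ from Lemma~\ref{lem:square-root-face-change} over the simplex region where $R$ is exactly the set of minimal coordinates, substitute $u_i=p_i-p_{\min}$ with Jacobian $\det(I-\frac1n\mathbf 1\mathbf 1^\top)=k/n$, evaluate the Dirichlet integral as $k!/n!$, and multiply $k\,k!/n^{k+1}$ by $\binom nk$. You leave the case $k=n$ implicit; there the integral is zero-dimensional and the formula reduces to $\Pr(X=0)=n!/n^n$, as the paper notes, and your closing normalization check is a harmless extra.
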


\begin{proof}
Fix a $k$-element set $R$ and put $S=N\setminus R$. By Lemma~\ref{lem:square-root-face-change}, its probability is obtained by integrating $n!a^k\,dp_S$ over the simplex region in which the coordinates in $R$ share the common minimum $a$. Set
\[
    y_i:=p_i-a\quad(i\in S),
    \qquad Z:=na=1-\sum_{i\in S}y_i.
\]
Then $y_i>0$, $\sum_{i\in S}y_i<1$, and
\[
    \left|\det\frac{\partial p_S}{\partial y_S}\right|
    =\det\!\left(I-\frac1n\mathbf1\mathbf1^\top\right)=\frac{k}{n}.
\]
The probability of this specified root set is therefore
\begin{align*}
    \frac{n!k}{n^{k+1}}
    \int_{\substack{y_i>0\\\sum_{i\in S}y_i<1}}
        \left(1-\sum_{i\in S}y_i\right)^kdy_S
    =\frac{k\,k!}{n^{k+1}},
\end{align*}
where the last equality is the standard Dirichlet integral. This formula also covers $k=n$, with a zero-dimensional integral. Finally, there are $\binom nk$ possible root sets, so
\[
    \Pr(K=k)=\binom nk\frac{k\,k!}{n^{k+1}}
             =\frac{k(n-1)!}{n^k(n-k)!}.
\]
\end{proof}

\begin{restatedlemma}{lem:square-choice-share-representation}
    
    Given the realized root set $R$, and hence $K=|R|$, let $Z_0\sim\operatorname{Gamma}(K+1,1)$ and $E_i\sim\operatorname{Exp}(1)$, $i\notin R$, be mutually independent, where the Gamma distribution uses the shape--rate parametrization, and set $S:=Z_0+\sum_{i\notin R}E_i$. Then the conditional distribution of the choice-share vector is that of the vector given by
    \[
        p_i=
        \begin{cases}
            \dfrac{Z_0}{nS}, & i\in R,\\[4pt]
            \dfrac{Z_0}{nS}+\dfrac{E_i}{S}, & i\notin R.
        \end{cases}
    \]
\end{restatedlemma}

\begin{proof}
Retain the notation in the proof of Lemma~\ref{lem:square-root-count}. Before integration, the change-of-variables density conditional on the specified root set $R$ is proportional to
\[
    Z^K\,\mathbf1\!\left\{y_i>0,
        \ \sum_{i\notin R}y_i+Z=1\right\}.
\]
After normalization, this is exactly
\[
    \bigl((y_i)_{i\notin R},Z\bigr)
       \sim\operatorname{Dirichlet}(1^{n-K},K+1).
\]
A Dirichlet vector is obtained by normalizing independent Gamma variables. Thus we may write $y_i=E_i/S$ and $Z=Z_0/S$, where the variables and $S$ are as in the statement. Since $a=Z/n$, every root has share $p_i=a$, while every nonroot has share $p_i=a+y_i$. Substitution gives the claimed representation.
\end{proof}

Set $T_n:=-\log(np_{\min})$. The next two lemmas determine the scale of this statistic and compare it with $B_{\mathrm{sing}}$.

\begin{lemma}\label{lem:square-root-scale}
    We have $K=\Theta_p(\sqrt n)$ and $p_{\min}=\Theta_p(n^{-3/2})$. Moreover, $\max_{i\in N}p_i=O_p(\log n/n)$ and
    \[
        T_n=\frac12\log n+O_p(1).
    \]
\end{lemma}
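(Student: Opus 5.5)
The plan is to read everything off the exact law of $K$ in Lemma~\ref{lem:square-root-count} and the Gamma--exponential representation in Lemma~\ref{lem:square-choice-share-representation}.

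\emph{Step 1: $K=\Theta_p(\sqrt n)$.} Write
\[
\Pr(K=k)=\frac kn\prod_{j=1}^{k-1}\Bigl(1-\frac jn\Bigr).
\]
For $k=x\sqrt n$ with $x$ fixed, the product equals $\exp(-x^2/2+o(1))$ uniformly on compact sets. Hence $\Pr(K=k)\approx n^{-1/2}x e^{-x^2/2}$, so $K/\sqrt n$ converges in distribution to the Rayleigh law. For tightness it suffices to note two bounds. First, $\Pr(K\le c\sqrt n)\le \sum_{k\le c\sqrt n}k/n\le c^2$. Second, $\Pr(K\ge C\sqrt n)\le\sum_{k\ge C\sqrt n}(k/n)e^{-k(k-1)/(2n)}$, which is small for large $C$. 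This gives $K=\Theta_p(\sqrt n)$.

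\emph{Step 2: conditional estimates.} Condition on $R$ with $K=k$, $c\sqrt n\le k\le C\sqrt n$. In the representation, $S=Z_0+\sum_{i\notin R}E_i$, which is a sum of $n+1$ unit exponentials, so $S/n\to1$ in probability by the law of large numbers, uniformly in $k$. Also $Z_0\sim\operatorname{Gamma}(k+1,1)$ satisfies $Z_0/k\to1$ in probability, since its variance is $k+1=o(k^2)$. The minimal share is attained at the roots, so
\[
p_{\min}=\frac{Z_0}{nS}=\frac{k(1+o_p(1))}{n^2}.
\]
Combined with Step 1 this gives $p_{\min}=\Theta_p(n^{-3/2})$. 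For the maximum, use
\[
\max_i p_i\le \frac{Z_0}{nS}+\frac{\max_{i\notin R}E_i}{S}.
\]
Since $\max_{i\notin R}E_i\le 2\log n$ w.h.p. and $S\ge n/2$ w.h.p., we get $\max_ip_i=O_p(\log n/n)$.

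\emph{Step 3: the statistic $T_n$.} We have
\[
np_{\min}=\frac{Z_0}{S}=\frac{K}{n}\,(1+o_p(1)).
\]
Therefore
\[
T_n=\log n-\log K+o_p(1)=\tfrac12\log n-\log(K/\sqrt n)+o_p(1),
\]
and $\log(K/\sqrt n)=O_p(1)$ by Step 1.

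\emph{Main obstacle.} The only delicate point is making the conditional limits uniform over the random $K$, so that the conditioning on $R$ can be removed. I would handle this by working on the event $\{c\sqrt n\le K\le C\sqrt n\}$, whose probability is at least $1-\epsilon$. On that event I would use Chebyshev bounds for $Z_0$ and $S$, which depend on $k$ only through explicit variances, and then average over $R$.
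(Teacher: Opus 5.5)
Your proposal is correct and follows essentially the paper's route. The exact law of $K$ from Lemma~\ref{lem:square-root-count} gives the Rayleigh scaling of $K/\sqrt n$, and the representation of Lemma~\ref{lem:square-choice-share-representation}, with $Z_0/K\to1$ in probability, $S\sim\operatorname{Gamma}(n+1,1)$, and a union bound on $\max_{i\notin R}E_i$, yields the bounds on $p_{\min}$, on $\max_i p_i$, and on $T_n=\log(S/Z_0)$. Your explicit tail sums and Chebyshev bounds on $\{c\sqrt n\le K\le C\sqrt n\}$ simply make concrete the paper's ``uniformly over root sets'' step.
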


\begin{proof}
The mass function in Lemma~\ref{lem:square-root-count} is the difference of successive terms in the tail identity
\[
    \Pr(K\ge k)=\prod_{j=1}^{k-1}\left(1-\frac{j}{n}\right),
    \qquad 1\le k\le n.
\]
For $k=\lceil t\sqrt n\rceil$ with fixed $t>0$, a Taylor expansion gives
\[
    \log\Pr(K\ge k)
    =\sum_{j=1}^{k-1}\log\left(1-\frac{j}{n}\right)
    =-\frac{k(k-1)}{2n}+o(1)\longrightarrow-\frac{t^2}{2}.
\]
Hence $K/\sqrt n$ converges in distribution to a random variable $R_0$ with $\Pr(R_0>t)=e^{-t^2/2}$. In particular, $K=\Theta_p(\sqrt n)$ and $K\to\infty$ in probability.

Conditional on the realized root set $R$, the representation in Lemma~\ref{lem:square-choice-share-representation} and Gamma concentration give $Z_0/K\to_p1$, uniformly over root sets with $K\to\infty$. The conditional law of $S$ is $\operatorname{Gamma}(n+1,1)$ and does not depend on $R$, so $S/n\to_p1$. It follows that
\[
    n^{3/2}p_{\min}
    =\frac{Z_0/K}{S/n}\frac{K}{\sqrt n}
    \ \Longrightarrow\ R_0,
\]
which proves $p_{\min}=\Theta_p(n^{-3/2})$. On the event $K<n$,
\[
    p_{\max}\le p_{\min}
       +\frac{\max_{i\notin R}E_i}{S}.
\]
A union bound gives $\max_{i\notin R}E_i=O_p(\log n)$, while $S=\Theta_p(n)$. Since $\Pr(K=n)=n!/n^n=o(1)$, this proves the claimed upper bound on $p_{\max}$.

Finally, $np_{\min}=Z_0/S$, and hence
\[
    T_n=\log\frac{S}{Z_0}
       =\log\frac{n}{K}+o_p(1)
       =\frac12\log n-\log\frac{K}{\sqrt n}+o_p(1)
       =\frac12\log n+O_p(1),
\]
where the last step uses that $R_0$ is finite and strictly positive a.s.
\end{proof}

\begin{restatedlemma}{lem:square-share-budget-comparison}
    For $T_n=-\log(np_{\min})$ defined above,
    \[
        \frac{T_n}{\beta}-O_p(1)
        \le B_{\mathrm{sing}}
        \le \frac{T_n}{\alpha}+O(1).
    \]
\end{restatedlemma}

\begin{proof}
Write $B=B_{\mathrm{sing}}=\sum_i x_i$ and choose a root $r\in R$. By conditioning on the fresh value $V_r$ and substituting $d=1-V_r$, its choice share is
\begin{equation}
    p_{\min}
    =\int_0^1 f(1-d)\prod_{j\ne r}F(1-d-x_j)\,dd.      \label{eq:root-share-integral}
\end{equation}
For every $y\ge0$, the lower density bound gives $F(1-y)\le e^{-\alpha y}$, with the extension of $F$ covering $y>1$. Since $x_r=0$ and $f\le\beta$,
\[
    p_{\min}
    \le \beta e^{-\alpha B}
       \int_0^1e^{-\alpha(n-1)d}\,dd
    \le \frac{\beta}{\alpha(n-1)}e^{-\alpha B}.
\]
Taking logarithms yields $B\le T_n/\alpha+O(1)$. In particular, Lemma~\ref{lem:square-root-scale} implies $B=O_p(\log n)$.

We next show that no individual payment is macroscopically large. Put $Q=\max_i x_i$, choose an index $j^\star$ attaining this maximum, set $H=\{i:x_i>Q/2\}$, and write $p_{\max}=\max_i p_i$. If
\[
    V_{j^\star}>1-\min\{Q/2,1\},
\]
then the good chosen by the fresh agent lies in $H$: the adjusted value of $j^\star$ is strictly larger than the maximum possible adjusted value outside $H$. This event has probability at least $\alpha\min\{Q/2,1\}$. Since $|H|\le2B/Q$ whenever $Q>0$, we obtain
\begin{equation}
    \alpha\min\{Q/2,1\}
    \le\sum_{i\in H}p_i
    \le |H|p_{\max}
    \le\frac{2Bp_{\max}}{Q}.                           \label{eq:max-payment-control}
\end{equation}
The bounds already proved give $Bp_{\max}=o_p(1)$. Equation \eqref{eq:max-payment-control} first implies $\Pr(Q>2)\to0$; on $Q\le2$ it then gives $Q^2\le4Bp_{\max}/\alpha$. On this event,
\[
    QB\le\frac{2}{\sqrt\alpha}B^{3/2}p_{\max}^{1/2}.
\]
Together with $B=O_p(\log n)$, $p_{\max}=O_p(\log n/n)$, and $\Pr(Q>2)\to0$, this proves $Q=o_p(1)$ and $QB=o_p(1)$.

The upper density bound implies that, for all sufficiently small $y\ge0$,
\[
    \log F(1-y)\ge-\beta y-C_\beta y^2
\]
for a constant $C_\beta$. Restrict \eqref{eq:root-share-integral} to $0\le d\le1/n$. Uniformly on this interval, $d+x_j=o_p(1)$ and
\[
    \sum_{j\ne r}(d+x_j)^2
    \le 2(n-1)d^2+2\sum_jx_j^2
    \le\frac{2}{n}+2QB=o_p(1).
\]
Using also $f\ge\alpha$ in~\eqref{eq:root-share-integral}, we conclude that for a constant $c_{\alpha,\beta}>0$,
\[
    p_{\min}
    \ge\frac{c_{\alpha,\beta}}{n}
       e^{-\beta B-o_p(1)}.
\]
Taking logarithms now gives $B\ge T_n/\beta-O_p(1)$, completing the comparison.
\end{proof}

\begin{restatedtheorem}{thm:square-logarithmic-scale}
    Suppose that $m_n=n$. For every fixed $\epsilon>0$,
    \[
        \Pr\!\left\{
            \left(\frac{1}{2\beta}-\epsilon\right)\log n
            \le \befmon
            \le
            \left(\frac{1}{2\alpha}+\epsilon\right)\log n
        \right\}\longrightarrow 1.
    \]
    In particular, $\befmon=\Theta_p(\log n)$.
\end{restatedtheorem}

\begin{proof}
Lemma~\ref{lem:square-root-scale} and Lemma~\ref{lem:square-share-budget-comparison} give
\[
    \frac{1}{2\beta}\log n-O_p(1)
    \le B_{\mathrm{sing}}
    \le \frac{1}{2\alpha}\log n+O_p(1).
\]
For any fixed $\epsilon>0$, the two $O_p(1)$ terms have magnitude at most $\epsilon\log n$ w.h.p. Lemma~\ref{lem:square-singleton-reduction} gives $\befmon=B_{\mathrm{sing}}$ on another event whose probability tends to one. Intersecting these events proves the displayed probability statement. Taking, for example, any $\epsilon<1/(2\beta)$ also yields $\befmon=\Theta_p(\log n)$.
\end{proof}

\endgroup

\end{document}